\def\bth{\mbox{\boldmath $\theta$}}
\def\bbeta{\mbox{\boldmath $\beta$}}
\newcommand{\by}{\mbox{\boldmath $y$}}
\newcommand{\bz}{\mbox{\boldmath $z$}}
\newcommand{\bv}{\mbox{\boldmath $v$}}
\newcommand{\bb}{\mbox{\boldmath $b$}}
\newcommand{\bzero}{\mbox{\boldmath $0$}}
\newcommand{\mb}{\mathbf}
\newcommand{\mc}{\mathcal}
\newtheorem{theorem}{Theorem}[section]
\newtheorem{lemma}[theorem]{\bf Lemma}
\newtheorem{corollary}[theorem]{\bf Corollary}
\newcommand{\iid}{\stackrel{iid}{\sim}}
\newcommand{\labitem}[2]{%
\def\@itemlabel{\textbf{#1}{.}}
\item
\def\@currentlabel{#1}\label{#2}}
\title{Bayesian Restricted Likelihood Methods: Conditioning on Insufficient Statistics in Bayesian Regression}
\author{John R. Lewis, Steven N. MacEachern and  Yoonkyung Lee \\
{\small \it Department of Statistics, The Ohio State University, Columbus, Ohio 43210}\\
{\small lewis.865@buckeyemail.osu.edu, snm@stat.osu.edu and yklee@stat.osu.edu}
\thanks{This research has been supported by Nationwide Insurance Company and by the NSF under grant numbers DMS-1007682 and DMS-1209194.  The views in this paper are not necessarily those of Nationwide Insurance or the NSF.}} 
\begin{document}
\maketitle

\begin{abstract}
Bayesian methods have proven themselves to be successful across a wide
range of scientific problems and have many well-documented advantages
over competing methods. However, these methods run into difficulties
for two major and prevalent classes of problems: handling data sets
with outliers and dealing with model misspecification. We outline the
drawbacks of previous solutions to both of these problems and propose a new method as an
alternative.  When working with the new method, the data is summarized
through a set of insufficient statistics, targeting inferential quantities of interest, and the prior
distribution is updated with the summary statistics rather than the complete
data.  By careful choice of conditioning statistics, we
retain the main benefits of Bayesian methods while reducing the
sensitivity of the analysis to features of the data not captured by
the conditioning statistics. For reducing sensitivity to outliers,
classical robust estimators (e.g., M-estimators) are natural choices
for conditioning statistics. 
A major contribution of this work is the development of a data 
augmented Markov chain Monte Carlo (MCMC) algorithm
for the linear model and a large class of
summary statistics. We demonstrate the method on simulated and real data sets containing outliers and subject to model
misspecification. Success is manifested in better predictive
performance for data points of interest as compared to competing
methods.
\end{abstract}

\section{Introduction}
Bayesian methods have provided successful solutions to a wide range of scientific problems, with their value
having been demonstrated both empirically and theoretically.  Bayesian inference relies on a model consisting of three elements:  the prior distribution, the loss function, and the likelihood or sampling density.  While formal optimality of Bayesian methods is unquestioned if one accepts the validity of all three of these elements, a healthy skepticism encourages us to question each of them.  Concern about the prior distribution has been addressed through the development of techniques for subjective elicitation \citep{garthwaite2005, ohagan2006} and objective Bayesian methods \citep{berger2006}.  Concern about the loss function is reflected in, for example, the extensive literature on Bayesian hypothesis tests \citep{kass1995}.

The focus of this work is the development of techniques to handle imperfections in the likelihood $f(\by|\bth) = L(\bth|\by)$. Concern for imperfections in the likelihood are reflected in work considering minimally informative likelihoods \citep{yuan1999minimally}, sensitivities of inferences to perturbations in the model \citep{zhu2011}, the specification of a class of models and the use of Bayesian model averaging over the class \citep{clyde2004}, and considerations of such averaging when the specified class may not contain the so-called true data generating model \citep{bernardo2000, clyde2013, clarke2013Complete}.   In practice, the imperfections in a proposed likelihood often show themselves through the presence of outliers -- cases not reflecting the phenomenon under study. There are three main solutions to Bayesian outlier-handling.  The first is to replace the basic sampling density with a mixture model which includes one component for the ``good'' data and a second component for the ``bad'' data.  With this approach, the good component of the sampling density is used for prediction of future good data.  The second approach replaces the
basic sampling density with a thick-tailed density in an attempt to discount outliers, yielding techniques that 
often provide solid estimates of the center of the distribution but do not easily translate to predictive densities for further good data.  The third approach fits a flexible (typically nonparametric) model to  the data, producing a Bayesian version of a density estimate for both good and bad data.  In recent development, inference is made through the use of robust inference functions \citep{lee2014}.  

These traditional strategies  all have their drawbacks.  The outlier-generating processes 
may be transitory in nature, constantly shifting as the source of bad data changes.  This prevents us from appealing to large-sample arguments to claim that, with enough data, we can nail down a model for both good and bad data combined.  Instead of attempting to model both good and bad data, we propose a novel strategy for handling outliers. In a nutshell, we begin with a complete model  as if all of the data are good. Rather than driving the move from prior to posterior  by the full likelihood, we use only the likelihood driven by a few summary statistics which typically target inferential quantities
of interest.  We call this likelihood a restricted likelihood because conditioning is done on a restricted set of data; the set which satisfies the observed summary statistics. This restricted likelihood leads to a formal update of the prior distribution based on the sampling density of the summary statistics. 

The remainder of the paper is as follows: Section~\ref{restrictedlikelihood} introduces the Bayesian restricted likelihood and provides context with previous work, Section~\ref{illustrations} demonstrates some advantages of the methods on simple examples, and Section~\ref{BayesLinMod} details an MCMC algorithm to apply the method to Bayesian linear models. This computational strategy is a major contribution to the work, providing an approach to apply the method on realistic examples. Many of the the technical proofs are in the Appendix \ref{sec:appendix} with \texttt{R} code available from the authors. Sections \ref{simData} and \ref{RealData} illustrate the method with simulated data and a real insurance industry data set containing many outliers with a novel twist on model evaluation. A discussion (Section~\ref{Conclusions}) provides some final commentary on the new method. 

\section{Restricted Likelihood}
\label{restrictedlikelihood}

\subsection{Examples}
To describe the use of the restricted likelihood, 
we begin with a pair of simple examples for the one-sample problem.  For both, the model takes the data $\by=(y_1,\ldots,y_n)$ to be a random sample
of size $n$ from a continuous distribution indexed by a parameter
vector $\bth$, with pdf $f(y| \bth)$.  The standard, or full,
likelihood is $L(\bth | \by) = \prod_{i=1}^n f(y_i | \bth)$.  

The first example considers the case where a known subset of the data are known to be 
bad in the sense of not informing us about $\bth$.  This case mimics the setting where outliers are identified and discarded before doing a formal analysis.  Without loss of generality, we label the good cases $1$ through $n-k$ and the bad cases $n-k+1$ through $n$.  The relevant likelihood to be used to move from prior distribution to posterior distribution is clearly $L(\bth | y_1, \ldots, y_{n-k}) = \prod_{i=1}^{n-k} f(y_i | \bth)$.  For an equivalent analysis, we rewrite the full likelihood as the product of two pieces:
\begin{eqnarray}
\label{OutlyingCases}
L(\bth | \by)  
= \left( \prod_{i=1}^{n-k} f(y_i | \bth) \right) \left( \prod_{i=n-k+1}^{n} f(y_i | \bth) \right), 
\end{eqnarray}
where the second factor may not actually depend on $\bth$. We wish to keep the first factor and drop the second for better inference on $\bth$.

The second example involves deliberate censoring of small and large observations. This is
sometimes done as a precursor to the analysis of reaction time experiments  \citep[e.g.,][]{ratcliff1993} where very small and large reaction times are physiologically implausible;  explained by either anticipation or lack of attention of the subject.  
With lower and upper censoring times at $t_1$ and $t_2$, the post-censoring sampling distribution is of mixed form, with masses $F(t_1|\bth)$ at $t_1$ and $1-F(t_2|\bth)$ at $t_2$,
and density $f(y | \bth)$ for $y \in (t_1, t_2)$.  We adjust the original data $y_i$,
producing $c(y_i)$ by defining $c(y_i)= t_1$ if $y_i \leq t_1$, $c(y_i)=t_2$ 
if $y_i \geq t_2$, and $c(y_i)=y_i$ otherwise.  
The adjusted update is performed with $L(\bth |c(\by))$.  
Letting $g(t_1|\bth) = F(t_1|\bth)$,
$g(t_2 | \bth) = 1 - F(t_2|\bth)$, and $g(y|\bth)=f(y|\bth)$ for
$y \in (t_1, t_2)$, we may rewrite the full 
likelihood as the product of two pieces
\begin{eqnarray}
\label{Censoring} 
L(\bth | \by) =  \left( \prod_{i=1}^n g(c(y_i)  | \bth) \right) \left( \prod_{i=1}^n f(y_i|\bth,c(y_i)). \right) ,  
\end{eqnarray}
Only the first part is retained in the analysis. Several more examples are detailed in \cite{lewis2014}.

\subsection{Generalization}

To generalize the approach in \eqref{OutlyingCases} and
\eqref{Censoring}, 
we write the full likelihood in two pieces with a conditioning statistic $T(\by)$, as indicated below:
\begin{eqnarray}
\label{FullLikelihood}
L(\bth | \by)  
& = & f(T(\by) | \bth) \,\, f(\by |\bth, T(\by)) .  
\end{eqnarray}
Here,  $f(T(\by) | \bth)$ is the conditional pdf of $T(\by)$ given $\bth$ and $f(\by |\bth, T(\by))$ is the conditional pdf of $\by$ given $\bth$ and $T(\by)$.  In the dropped case example, the conditioning statistic is $T(\by) = (y_1, \ldots, y_{n-k})$.  In 
the censoring example, the conditioning statistic is $T(\by) = (c(y_1),\ldots,c(y_n))$.  We refer to 
$f(T(\by) | \bth)$ as the restricted likelihood and $L(\bth | \by)=f(\by|\bth)$ as the full likelihood.  

Bayesian methods can make use of a restricted likelihood 
since $T(\by)$ is a well-defined random variable with a probability distribution indexed by $\bth$.  
This leads to the restricted likelihood posterior 
\begin{eqnarray}
\label{RestrictedPosterior}
\pi(\bth | T(\by)) & = & \frac{\pi(\bth) f(T(\by) | \bth)}{m(T(\by))} ,
\end{eqnarray}
where $m(T(\by))$ is the marginal distribution of $T(\by)$ under the prior distribution.  
Predictive statements for further (good) data rely on the model.  For another observation,
say $y_{n+1}$, we would have the predictive density 
\begin{eqnarray}
\label{RestrictedpredDist}
f(y_{n+1} | T(\by)) = \int f(y_{n+1} | \bth) \pi(\bth | T(\by))\ d\bth .  
\end{eqnarray}

\subsection{Literature review}

Our motivation for the use of summary statistics in Bayesian inference is concern about outliers or, more generally, model misspecification. Specifically, the likelihood is not specified correctly and concentrating on using well chosen parts of the data can help improve the analysis \citep[e.g.,][]{wong2004}. Direct use of restricted likelihood for this reason appears in many areas of the literature.  For example, the use of rank likelihoods is discussed by \cite{savage1969}, \cite{pettitt1983, pettitt1982}, and more recently by \cite{hoff2013}.  
\cite{lewis2012} make use of order statistics and robust estimators as choices for $T(\by)$ in the location-scale setting. 
Asymptotic properties of restricted posteriors are studied by \cite{doksum1990}, \cite{clarke1995}, \cite{yuan2004},  and \cite{hwang2005}. The tenor of these asymptotic results is that, for a variety of conditioning statistics with non-trivial regularity conditions on prior, model, and likelihood, the
posterior distribution resembles the asymptotic sampling distribution of the conditioning statistic.  

Restricted likelihoods have also been used as practical approximations to a full likelihood. For example, \cite{pratt1965} appeals to heuristic arguments regarding approximate sufficiency to justify the use of the restricted likelihood of the sample mean and standard deviation. Approximate sufficiency is also appealed to in the use of Approximate Bayesian Computation (ABC), which is related to our method.  
ABC is a collection of posterior approximation methods which has recently experienced success in applications to epidemiology, genetics, and quality control \citep[see, for example,][]{tavare1997, pritchard1999,  marjoram2003, fearnhead2012}. Interest typically lies in the full data posterior and ABC is used for computational convenience as an approximation.  Consequently, effort is made to choose an approximately sufficient $T(\by)$ and update to the ABC posterior by using the likelihood $L(\bth| \mathcal{B}(\by))$, where $\mathcal{B}(\by)=\{\by^{*}|\rho(T(\by),T(\by^{*})) \leq \epsilon\}$, $\rho$ is a metric, and $\epsilon$ is a tolerance level. This is the likelihood conditioned on the collection of data sets that result in a $T(\cdot)$ within $\epsilon$ of the observed $T(\by)$. 
With an approximately sufficient $T(\cdot)$ and a small enough $\epsilon$, heuristically  $L(\bth|\mathcal{B}(\by))\approx L(\bth|T(\by))\approx L(\bth|\by)$. Consequently, the ABC posterior approximates the full data posterior and efforts have been made to formalize what is meant by  approximate sufficiency \citep[e.g.,][]{joyce2008}. ABC is related to our method in that the conditioning is on something other than the data $\by$.  However, we specifically seek to condition on an insufficient statistic to guard against misspecification in parts of the likelihood. Additionally, we develop methods where the conditioning is exact (i.e. $\epsilon = 0$).

This work extends the development of Bayesian restricted likelihood by arguing that deliberate choice of an insufficient statistic $T(\by)$ guided by targeted inference is sound practice. We also expand the class of conditioning statistics for which a formal Bayesian update can be achieved.  Our methods do not rely on asymptotic properties, nor do they rely on approximate conditioning.

\section{Illustrative Examples}
\label{illustrations}
Before discussing computational details, the method is applied to two simple examples on well known data sets to demonstrate its effectiveness in situations where outliers are a major concern. The full model in each case fits into the Bayesian linear regression framework discussed in Section \ref{BayesLinMod}. 

The first example is an analysis of Simon Newcomb's 66 measurements of the passage time of light \citep{stigler1977}; two of which are significant outliers in the lower tail. The full model is a standard location-scale Bayesian model also used in \cite{lee2014}:
\begin{equation}
\beta\sim N(23.6, 2.04^{2}),\  \sigma^{2}\sim IG(5, 10), \ y_{i}\iid N (\beta, \sigma^{2}), i=1,2,\dots, n=66,
\end{equation}
where $y_{i}$ denotes the $i^{th}$ (recorded) measurement of the passage time of light. $\beta$ is interpreted as the passage time of light with the deviations $y_i - \beta$ representing measurement error.
Four versions of the restricted likelihood are fit with conditioning statistics: 1) Huber's M-estimator for location with Huber's `proposal 2'  for scale 2)  Tukey's M-estimator for location with Huber's `proposal 2'  for scale 3) LMS (least median squares) for location with associated estimator of scale and 4) LTS (least trimmed squares)  for location with associated estimator of scale. The tuning parameters for the M-estimators are chosen to achieve $95\%$ efficiency under normality \citep{huber2009} and, for comparability, roughly $5\%$ of the residuals are trimmed for LTS.  Two additional approaches to outlier handling are considered: 1) the normal distribution is replaced with a t-distribution and, 2) the normal distribution is replaced with a mixture of two normals. The t-model assumes $y_{i}\iid t_{\nu} (\beta, \sigma^{2})$ with $\nu=5$. The prior on $\sigma^{2}$ is $IG(5, \frac{\nu-2}{\nu}10)$ and ensures that the prior on the variance is the same as the other models. The mixture takes the form: $y_{i}\iid pN (\beta, \sigma^{2}) + (1-p)N(\beta, 10\sigma^{2})$ with the prior $p \sim \text{beta}(20,1)$ on the probability of belonging to the `good' component.

The posterior of $\beta$ under each model appears in Figure \ref{fig:newcomb_post}.  The posteriors group into two batches.  The normal model and restricted likelihood with LMS do not discount the outliers and have posteriors centered at low values of $\beta$.  These posteriors are also quite diffuse.  In contrast, the t-model, mixture model, and the other restricted likelihood methods discount the outliers and have posteriors centered at higher values.  There is modest variation among these centers.  Posteriors in this second group have less dispersion than those in the first group.   
\begin{figure}[t]
\centering
{\includegraphics[width = 4in]{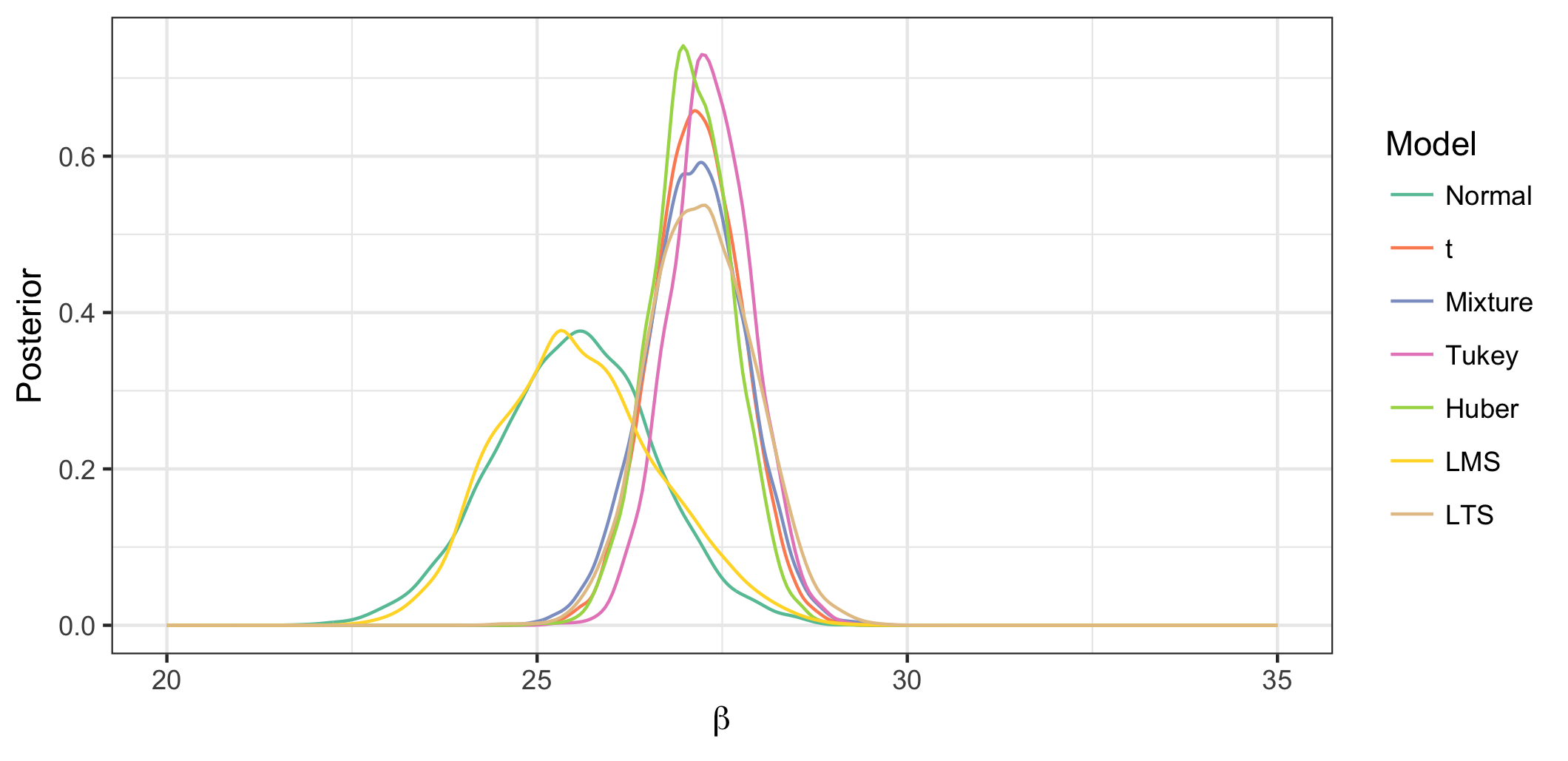}}
{\includegraphics[width = 4in]{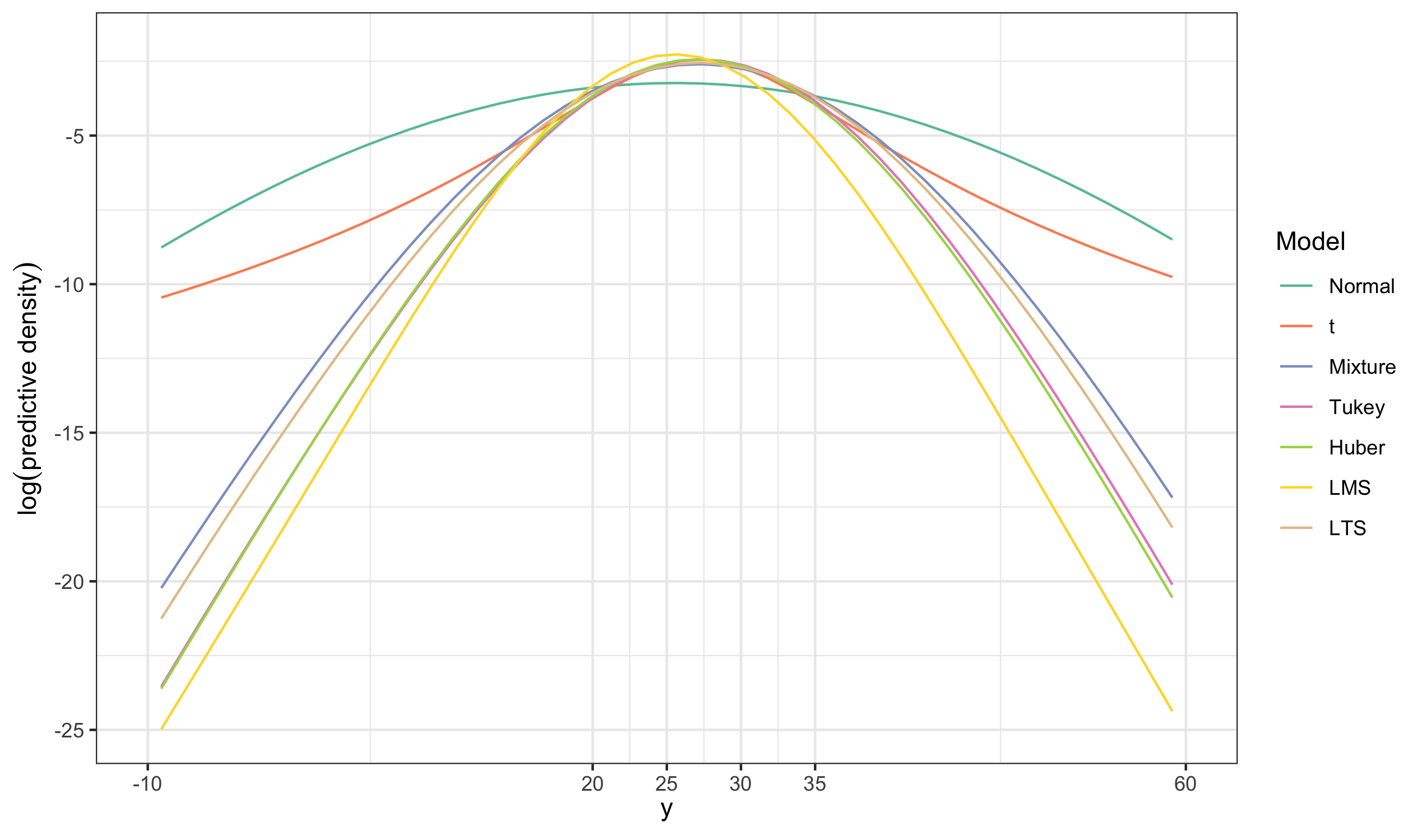}}
\caption{Results from the analysis of the speed of light data. Top: Posterior distributions of $\beta$ under each model. Bottom: Log posterior predictive distributions under each model. The differences in the tails are emphasized in the bottom plot. The horizontal axis is strategically labeled to help compare the centers of the distributions in each of the plots.}
\label{fig:newcomb_post}
\end{figure}

The pattern for predictive distributions differs (see bottom plot in Figure \ref{fig:newcomb_post}).  The normal and t-models have widely dispersed predictive distributions.  The other predictive distributions show much greater concentration.  The restricted likelihood fits based on M-estimators (Tukey's and Huber's) are centered appropriately and are concentrated. The restricted likelihood based on LTS and the mixture model results are also centered appropriately, but comparatively less concentrated. The LMS predictive is concentrated, but it is poorly centered.  

Overall, we find that the restricted likelihood methods based on M-estimators provide the most attractive analysis for these data.  They provide sharp and appropriate inference for parameters ($\beta$) and for prediction.

As a second example, a data set measuring the number of telephone calls in Belgium from 1950-1973 is analyzed. The outliers in this case are due to a change in measurement units on which calls were recorded for part of the data set. Specifically, for years 1964-1969 and parts of 1963 and 1970, the length of calls in minutes were recorded rather than the number of calls \citep{rousseeuw1987}. The full model is a standard normal Bayesian linear regression:
\begin{equation}
{\boldsymbol{\beta}}\sim N_{2}(\boldsymbol{\mu}_{0}, \boldsymbol{\Sigma}_{0}),\  \sigma^{2} \sim IG(a, b),\  \by \sim N(X\boldsymbol{\beta}, \sigma^{2} I),
\end{equation}
where $\bbeta = (\beta_{0}, \beta_{1})^{\top}$, $\by$ is the vector of the logarithm of the number of calls, and $X$ is the $n\times 2$ design matrix with a vector of 1's in the first column and the year covariate in the second.  Prior parameters are fixed via a maximum likelihood fit to the first 3 data points. In particular, the prior covariance for $\bbeta$ is set to $\Sigma_{0} = g\sigma_{0}^2 (X_{p}^{\top}X_{p})^{-1}$, with $X_{p}$ the $3\times 2$ design matrix for the first $3$ data points, $g=n=21$, $\sigma_{0} = 0.03$ and $\boldsymbol{\mu}_{0} = (1.87,  0.03)^{\top}$.  This has the spirit of a unit information prior \citep{kass1995reference} but uses a design matrix for data not used in the fit. Finally $a = 2$ and $b =1$.

Four models are compared: 1) the normal theory base model 2) a two component normal mixture model, 3) a t-model, and 4) a restricted likelihood model conditioning on Tukey's M-estimator for the slope and intercept with Huber's `proposal 2'  for scale. Each model is fit to the remaining 21 data points. The normal theory model is also fit a second time after removing observations 14-21 (years 1963 - 1970). The omitted cases consist of the obvious large outliers as well as the two smaller outliers at the beginning and end of this sequence of points caused by the change in measurement units. The mixture model allows different mean regression functions and variances for each component.  Both components have the same, relatively vague priors. The probability of belonging to the first component is given a $\text{beta}(5,1)$ prior. The heavy-tailed model fixes the degrees of freedom at 5 and uses the same prior on $\bbeta$.  The prior on $\sigma^2$ is adjusted by a scale factor of $3/5$ to provide the same prior on the variance.  

The data and  $95\%$ credible bands for the posterior predictive distribution under each model are displayed in Figure \ref{fig:calls_predictive}. The normal model fit to all cases results in a very wide posterior predictive distribution due to an inflated estimate of the variance. The t-model provides a similar predictive distribution.  The pocket of outliers from 1963 to 1970 overwhelms the natural robustness of the model and leads to wide prediction bands.  The outliers, falling toward the end of the time period, lead to a relatively high slope for the regression.  In contrast, the normal theory model fit to only the good data results in a smaller slope and narrower prediction bands.  The predictive distribution under the restricted likelihood approach is much more precise and is close to that of the normal theory fit to the non-outlying cases. The two component mixture model provides similar results, where the predictive distribution is formulated using only the good component. For these data, the large outliers are easily identified as following a distinct regression, leaving the primary component of the mixture for non-outlying data.  In a more complex situation where the outlier generating mechanism is transient (i.e., ever changing and more complex than for these data), modeling the outliers is more difficult. As in classical robust estimation, the restricted likelihood approach avoids explicitly modeling the outliers. 
\begin{figure}[t]
\centering
{\includegraphics[width = 4in]{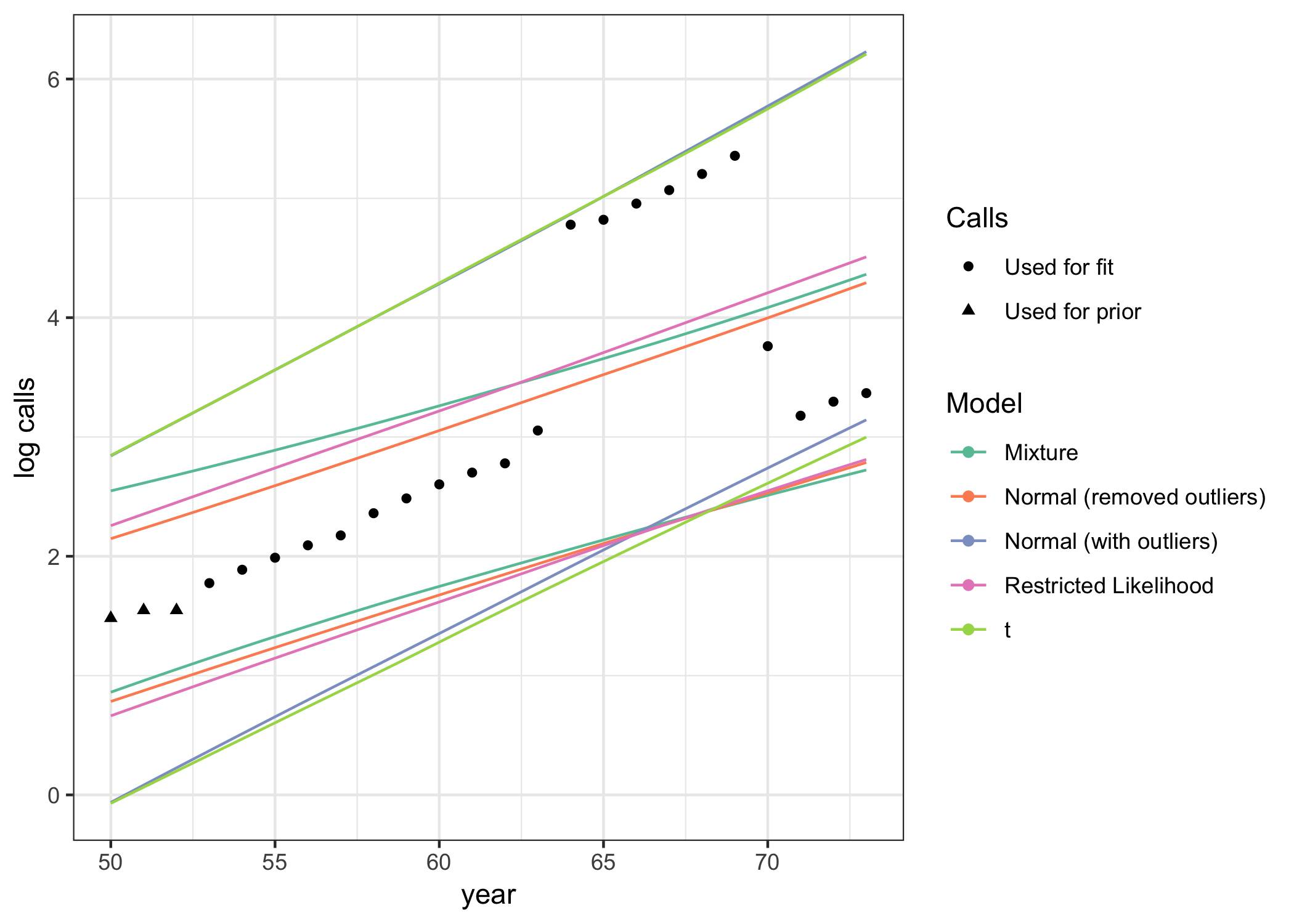}}
\caption{Pointwise posterior predictive intervals of log(calls) under the normal theory model fit to the non-outliers, the restricted likelihood model with Tukey's M-estimator for the slope and intercept with Huber's `proposal 2'  for scale, and a heavy-tailed t-distribution model. The first three data points were used to specify the prior with each model using the remaining 21 for fitting. The normal theory model was also fit after removing observations 14-20 (years 1963 - 1970).}
\label{fig:calls_predictive}
\end{figure}

\section{Restricted Likelihood for the Linear Model}
\label{BayesLinMod}

The simple examples in the previous section highlight the beneficial impact of a good choice of $T(\by)$ with the use of the restricted likelihood. This work focuses on robustness in linear models where natural choices include many used above:  M-estimators in the tradition of \cite{huber1964}, least median squares (LMS), and least trimmed squares (LTS). For these choices the restricted likelihood is not available in closed form, making computation of the restricted posterior a challenge. For low-dimensional statistics $T(\by)$ and parameters $\bth$, the direct computational strategies described in \cite{lewis2014} can be used to estimate the restricted posterior conditioned on essentially any statistic.  These strategies rely on estimation of the density of $f(T(\by)|\theta)$ using samples of $T(\by)$ for many values of $\bth$; a strategy which breaks down in higher dimensions. This section outlines a data augmented MCMC algorithm that can be applied to the Bayesian linear model when $T(\by)$ consists of estimates of the regression coefficients and scale parameter. 

\subsection{The Bayesian linear model}
We focus on the use of restricted likelihood for the Bayesian linear
model with a standard formulation: 
\begin{eqnarray}
\label{LinearModel}
\bth&=&(\bbeta,\sigma^2) \sim  \pi(\bth) 
\nonumber\\
y_i  & =  & x_i^\top \bbeta + \epsilon_i , \mbox{ for } i = 1, \ldots, n 
\end{eqnarray}
where $x_i$ and $\bbeta \in \mathbb{R}^p$, $\sigma^2 \in \mathbb{R}^+$, 
and the $\epsilon_i$ are independent draws from a distribution with center $0$ and scale $\sigma$. $X$ denotes the design matrix whose rows are  $x_i^\top$. For the restricted likelihood model,  conditioning statistics are assumed to be of the form $T(\by) = (\bb(X, \by), s(X, \by))$ where $\bb(X, \by)= (b_1(X,\by), \dots,b_p(X,\by))^\top\in \mathbb R^{p}$ is an estimator for the regression coefficients and $s(X, \by)\in \{0\} \cup {\mathbb R}^+$ is an estimator of the scale. Throughout, observed data and summary statistic is denoted by $\by_{obs}$ and $T(\by_{obs})=(\bb(X, \by_{obs}), s(X, \by_{obs}))$, respectively. 
Several conditions are imposed on the model and statistic to ensure validity of the MCMC algorithm:
\begin{itemize}
\labitem{C1}{fullRank} The $n \times p$ design matrix, $X$, whose $i^{th}$ row is $x_i^\top$, 
is of full column rank.  
\labitem{C2}{supReal} The $\epsilon_i$ are a random sample from some distribution which has a density with 
respect to Lebesgue measure on the real line and for which the support is the real line.  
\labitem{C3}{asb}$\bb(X,\by)$ is almost surely continuous and differentiable with respect to $\by$.  
\labitem{C4}{as} $s(X,\by)$ is almost surely positive, continuous, and differentiable with respect to $\by$.  
\labitem{C5}{regEq} $\bb(X,\by+X\bv)=\bb(X,\by)+\bv \ \ \text{for  all}\ \bv\in\mathbb{R}^p$. 
\labitem{C6}{scaleEqReg} $\bb(X,a\by)=a\bb(X,\by)\ \ \ \text{for all constants } a$.  
\labitem{C7}{regIn} $s(X,\by+X\bv)=s(X,\by) \ \ \text{for all}\ \bv\in\mathbb{R}^p$.  
\labitem{C8}{scaleEq2Reg} $s(X, a\by)=|a|s(X,\by) \ \ \text{for all constants } a$.  
\end{itemize}
Properties \ref{regEq} and \ref{scaleEqReg} of $\bb$ are called
\textit{regression} and \textit{scale equivariance},
respectively.  Properties \ref{regIn} and \ref{scaleEq2Reg} of $s$ are called \textit{regression invariance}
and \textit{scale equivariance}. 
Many estimators satisfy the above properties, including simultaneous M-estimators \citep{huber2009, maronna2006} for which the R package \texttt{brlm} (\texttt{github.com/jrlewi/brlm}) is available to implement the MCMC described here. Further software development is required to extend the MCMC implementation beyond these M-estimators. The package also implements the direct computational methods described in \cite{lewis2014}. These methods are effective in lower dimensional problems and were used in both examples in Section \ref{illustrations}.

\subsection{Computational strategy}
\label{highDim}

The general style of algorithm we present is a data augmented
MCMC targeting $f(\bth, \by |
T(\by)=T(\by_{obs}))$, the joint distribution of $\bth$ and the full
data given the summary statistic $T(\by_{obs})$. 
 The Gibbs sampler \citep{gelfand1990} iteratively samples from the
 full conditionals 1) $\pi(\bth|\by, T(\by)=T(\by_{obs}))$ and 2) $f(\by|\bth, T(\by)=T(\by_{obs}))$.  When $\by$ has the summary statistic $T(\by) = T(\by_{obs})$,
the first full conditional is the same as the full data posterior $\pi(\bth|\by)$. In this case, the condition $T(\by) = T(\by_{obs})$ is redundant.  This allows us to make use of conventional MCMC steps for generation of $\bth$ from the first full conditional.  For typical regression models, algorithms abound. Details of the recommended algorithms depend on details of
the prior distribution and sampling density and we assume this can be done \citep[see e.g.,][]{liu1994, liang2008}.  

For a typical model and conditioning statistic, the second full conditional $f(\by|\bth, T(\by)=T(\by_{obs}))$ 
is not available in closed form.  We turn to Metropolis-Hastings \citep{hastings1970},
using the strategy of proposing full data $\by \in \mathcal{A}:=\{\by \in \mathbb{R}^n | T(\by)=T(\by_{obs})\}$ from a well defined distribution with support $\mathcal{A}$ and either accepting or rejecting the
proposal. Let $\by_p, \by_c \in \mathcal{A}$ represent the proposed and current
full data, respectively. Denote the proposal distribution for $\by_{p}$ by $p(\by_p|\bth,T(\by_p) = T(\by_{obs})) = p(\by_p|\bth,\by_p \in \mathcal{A}) = p(\by_p | \bth)$.  The last equality follows from the fact that our $p(\cdot | \bth)$ assigns probability one to the event $\{ \by_p \in \mathcal{A} \}$.  These equalities still hold if the dummy argument $\by_p$ is replaced with $\by_c$.  The conditional density is
\begin{eqnarray*}
f(\by | \bth, \by \in \mathcal{A}) =  \frac{f(\by | \bth) I(\by \in \mathcal{A})}{\int_\mathcal{A} f(\by | \bth) d\by} 
      = \frac{f(\by | \bth)}{\int_\mathcal{A} f(\by | \bth) d\by} 
\end{eqnarray*}
for $\by \in \mathcal{A}$ and $I(\cdot)$ the indicator function.  This includes both $\by_p$ and $\by_c$.  The Metropolis-Hastings acceptance probability  is the minimum of 1 and $R$, where
\begin{eqnarray}
\label{MHRatio}
R & = & \frac{f(\by_p|\bth,\by_p \in \mathcal{A})}{f(\by_c|\bth,\by_c \in \mathcal{A})}  
                \frac{p(\by_c|\bth, \by_c \in \mathcal{A})}{p(\by_p|\bth,\by_p \in \mathcal{A})} \\
  & = & \frac{f(\by_p | \bth)}{\int_\mathcal{A} f(\by | \bth) d\by} \frac{\int_\mathcal{A} f(\by | \bth) d\by}{f(\by_c | \bth)} \frac{p(\by_c | \bth)}{p(\by_p | \bth)} \\
 & = & \frac{f(\by_p|\bth)}{f(\by_c|\bth)} \frac{p(\by_c|\bth)}{p(\by_p|\bth)} .  
\end{eqnarray}

For the models we consider, evaluation of $f(\by | \bth)$ is straightforward.  Therefore, the difficulty in implementing this Metropolis-Hastings step manifests  itself in the ability to both simulate from and evaluate $p(\by_p | \bth)$--the well defined distribution with support $\mathcal{A}$. We now discuss such an implementation method for the linear model in \eqref{LinearModel}.

\subsubsection{Construction of the proposal}
Our computational strategy relies on proposing $\by$ such that $T(\by) = T(\by_{obs})$ where $T(\cdot) = (\bb(X, \cdot), s(X, \cdot))$ satisfies the conditions \ref{asb}-\ref{scaleEq2Reg}. It is not a simple matter to do this directly, but with the specified conditions, it is possible to scale and shift any $\bz^{*} \in \mathbb{R}^{n}$ which generates 
a positive scale estimate to such a $\by$ via the following Theorem, whose proof is in the appendix. 
\begin{theorem}
\label{Transformation}
Assume that conditions \ref{as}-\ref{scaleEq2Reg} hold.  Then, any vector $\bz^* \in \mathbb{R}^n$ with conditioning statistic
$T(\bz^*)$ for which $s(X,\bz^*) > 0$ can be transformed into $\by$ with conditioning statistic $T(\by) = T(\by_{obs})$ 
through the transformation 
\[
\by = h(\bz^*) := \frac{s(X,\by_{obs})}{s(X,\bz^*)} \bz^* + X\left(\bb(X,\by_{obs}) - \bb(X,\frac{s(X,\by_{obs})}{s(X,\bz^*)} \bz^*)\right) .  
\]
\end{theorem}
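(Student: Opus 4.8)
The plan is to verify directly that the proposed map leaves both coordinates of the conditioning statistic unchanged, i.e.\ that $\bb(X,\by) = \bb(X,\by_{obs})$ and $s(X,\by) = s(X,\by_{obs})$ for $\by = h(\bz^*)$. The whole argument is an exercise in chaining the equivariance and invariance identities \ref{regEq}, \ref{regIn}, and \ref{scaleEq2Reg}; the only point requiring genuine care is the sign of the scale factor.

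First I would introduce the shorthand $a = s(X,\by_{obs})/s(X,\bz^*)$. Since $s(X,\bz^*) > 0$ by hypothesis and $s(X,\by_{obs}) > 0$ by condition \ref{as}, the scalar $a$ is strictly positive; this positivity is exactly what will later let me replace $|a|$ by $a$. With this notation the transformation reads $\by = a\,\bz^* + X\bv$, where $\bv := \bb(X,\by_{obs}) - \bb(X,a\,\bz^*) \in \mathbb{R}^p$ is a fixed vector once $\bz^*$ and $\by_{obs}$ are given. The mild subtlety to flag is that $\bv$ is defined through $\bb(X,a\,\bz^*)$, so the computation is mildly self-referential; condition \ref{regEq} is precisely what untangles it.

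Next I would compute the two coordinates separately. For the regression estimator, applying regression equivariance \ref{regEq} with shift $\bv$ gives $\bb(X, a\,\bz^* + X\bv) = \bb(X, a\,\bz^*) + \bv$; substituting the definition of $\bv$ cancels the $\bb(X,a\,\bz^*)$ term and leaves $\bb(X,\by) = \bb(X,\by_{obs})$. For the scale, I would first invoke regression invariance \ref{regIn} to strip off the $X\bv$ term, $s(X, a\,\bz^* + X\bv) = s(X, a\,\bz^*)$, and then scale equivariance \ref{scaleEq2Reg} to obtain $s(X,a\,\bz^*) = |a|\,s(X,\bz^*) = a\,s(X,\bz^*)$, the final equality using $a>0$; by the definition of $a$ this equals $s(X,\by_{obs})$. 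Combining the two computations yields $T(\by) = \big(\bb(X,\by), s(X,\by)\big) = T(\by_{obs})$.

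I do not anticipate a real obstacle here: the statement is a direct verification rather than an existence result, so there is no hard estimate or construction to push through. The one place to be deliberate is confirming $a>0$ so that the absolute value in \ref{scaleEq2Reg} can be dropped, which is exactly why the theorem restricts attention to $\bz^*$ with $s(X,\bz^*)>0$. I would also remark that condition \ref{scaleEqReg} (scale equivariance of $\bb$) does not enter this particular argument—only \ref{as}, \ref{regEq}, \ref{regIn}, and \ref{scaleEq2Reg} are used—even though it is presumably needed elsewhere in the development of the proposal density.
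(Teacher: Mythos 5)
Your proof is correct and takes essentially the same route as the paper's own verification: regression equivariance \ref{regEq} cancels the shift term for $\bb$, while regression invariance \ref{regIn} followed by scale equivariance \ref{scaleEq2Reg} handles $s$. Your explicit treatment of the positivity of the scale factor (allowing $|a|$ to be replaced by $a$) and your observation that \ref{scaleEqReg} is not needed here are refinements in exposition rather than a different argument.
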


\noindent Using the theorem, the general idea is to first start with an initial vector $\bz^*$ drawn from a known distribution, say $p(\bz^*)$, and transform via $h(\cdot)$ to $\by \in \mathcal{A}$. The proposal density $p(\by|\bth)$ is then a change-of-variables adjustment on $p(\bz^*)$ derived from $h(\cdot)$.
In general however, the mapping $h(\cdot)$ is many-to-one: for any $\bv\in \mathbb{R}^{n}$ and any $c\in \mathbb{R}^{+}$, $c\bz^{*} + X\bv$ map to the same $\by$. This makes the change-of-variables adjustment difficult.
We handle this by first noticing that the set $\mathcal{A}$ is an $n - p - 1$ dimensional space:  there are $p$ constraints imposed by the regression coefficients and one further constraint imposed by the scale. Hence, we restrict the initial $\bz^*$ to an easily understood $n - p - 1$ dimensional space.  Specifically, this space is  the unit sphere in the orthogonal complement of the column space of the design matrix: $\mathbb{S} := \{\bz^* \in \mathcal{C}^{\perp}(X)\  |\  ||\bz^*|| = 1\}$, where $\mathcal{C}(X)$ and  $\mathcal{C}^{\perp}(X)$ are the column space of $X$ and its orthogonal complement, respectively. The mapping $h: \mathbb{S} \rightarrow \mathcal{A}$ is one-to-one and onto. A proof is provided by Theorem \ref{1to1onto} in the appendix.  The one-to-one property makes the change of variables more feasible. The onto property is important so that the support of the proposal distribution (i.e. the range of $h(\cdot)$) contains the support of the target  $f(\by | \theta, y\in \mathcal{A})$, a necessary condition for convergence of the Metroplis-Hastings algorithm (in this case the supports are both $\mathcal{A}$).

Given the one-to-one and onto mapping $h: \mathbb{S} \rightarrow \mathcal{A}$, the general proposal strategy is summarized as follows:
\begin{enumerate}
\item Sample $\bz^*$ from a distribution with known density on $\mathbb{S}$.
\item Set $\by = h(\bz^*)$ and calculate the Jacobian of this transformation in two steps.
\begin{enumerate}
\item Scale from $\mathbb{S}$ to the set $\Pi(\mathcal{A}):= \{\bz\in \mathbb{R}^n |\ \exists\ \by\in \mathcal{A}\ s.t.\ \bz=Q \by \}$ with $Q = I - XX^{\top}$. \footnote{We have used condition \ref{fullRank} to assume  without loss of generality  that the columns of $X$ form an orthonormal basis for $\mc{C}(X)$ (i.e., $X^\top X=I$).} $\Pi(\mathcal{A})$ is the projection of $\mathcal{A}$ onto $\mathcal{C}^{\perp}(X)$ and, by condition \ref{regIn}, every element of this set has $s(X, \bz) = s(X, \by_{obs})$. Specifically, set $\bz=\frac{s(X,\boldsymbol{y}_{obs})}{s(X, \boldsymbol{z}^{*})}\bz^{*}$. There are two pieces of this Jacobian: one for the scaling and one for the mapping of the sphere onto $\Pi(\mathcal{A})$. The latter piece is given in equation \eqref{cosine}.
\item Shift  from $\Pi(\mathcal{A})$ to $\mathcal{A}$: $\by=\bz+X\left(\bb(X, \by_{obs})-\bb(X, \bz)\right)$. This shift is along the column space of $X$ to the unique element in $\mathcal{A}$. The Jacobian of this transformation is given by equation \eqref{eq:volume}.
\end{enumerate}
\end{enumerate}
The final proposal distribution including the complete Jacobian is given in equation \eqref{dens:ystst} with details in the next section. Before giving these details we provide a visualization  in Figure~\ref{fig:sampSpace} of each of the sets described above using a notional example to aid in the understanding of the strategy we take. In the figure, $n = 3$, $p=1$, and the conditioning statistic is $T(\by)=(\min(\by), \sum (y_i - \min(\by))^2)$. The set $\mathcal{A}$ is depicted for $T(\by_{obs})=(0,1)$ which we describe as a ``warped triangle'' in light blue, with each side corresponding to a particular coordinate of $\by$ being the minimum value of zero. The other two coordinates are restricted by the scale statistic to lie on the quarter circle of radius one in the positive orthant. In this example, the column vector $X=\bf{1}$ (shown as a reference) spans $\mc{C}(X)$  and $\mathbb{S}$ is a unit circle on the orthogonal plane (shown in red). $\Pi(\mathcal{A})$ is depicted as the bowed triangle in dark blue. We will come back to this artificial example in the next section in an attempt to visualize the Jacobian calculations.

\begin{figure}[t]
\centering
\includegraphics[width=2.75in]{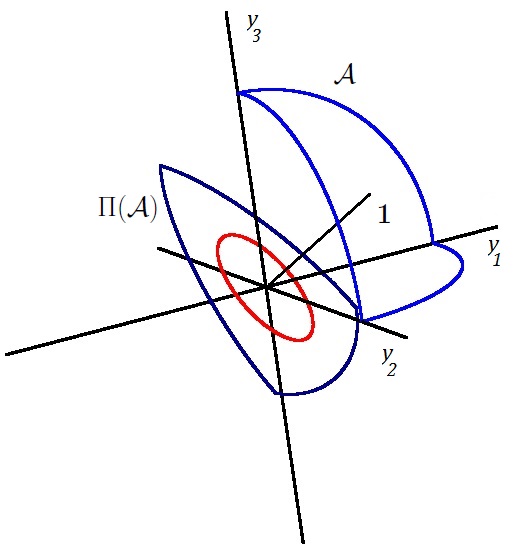}
\caption{A depiction of $\mathcal{A}$, $\Pi(\mathcal{A})$, and the
  unit circle for the illustrative example where $b_{1}(\mathbf{1},\by)=\min(\by)=0$ and
  $s(\mathbf{1},\by)=\sum (y_i -b_{1}(\mathbf{1},\by))^2 =1$.
$\mathcal{A}$ is the combination of three quarter circles, one
  on each plane defined by $y_i=0$. The projection of this manifold
  onto the deviation space is depicted by the bowed triangular shape
  in the plane defined by $\sum y_i=0$. The circle in this plane
  represents the sample space for the intermediate sample $\bz^*$. Also
  depicted is the vector $\mathbf{1}$, the design matrix for the
  location and scale setting.}
\label{fig:sampSpace}
\end{figure}

\subsubsection{Evaluation of the proposal density} 

We now explain each step in computing the Jacobian described above.
\vskip 0.05 in
\noindent
{\bf Scale from $\mathbb{S}$ to $\Pi(\mathcal{A})$} \\
The first step is constrained to $\mc{C}^\perp(X)$  and scales the initial $\bz^{*}$ to $\bz=\frac{s(X,\boldsymbol{y}_{obs})}{s(X, \boldsymbol{z}^{*})}\bz^{*}$. For the Jacobian, we consider two substeps: first, the distribution on  $\mathbb{S}$ is transformed to that along a sphere of radius $r=\|\bz\|={s(X,\boldsymbol{y}_{obs})}/{s(X, \boldsymbol{z}^{*})}$. By comparison of the volumes of these spheres, this transformation contributes a factor of $r^{-(n-p-1)}$ to the Jacobian. For the second substep, the sphere of radius $r$ is deformed onto $\Pi(\mathcal{A})$.  This deformation contributes an attenuation to the Jacobian equal to the ratio of infinitesimal volumes in the tangent spaces of the sphere and $\Pi(\mathcal{A})$ at $\bz$.  
Restricting to $\mc{C}^\perp(X)$, this ratio is the cosine of the angle between the normal 
vectors of the two sets at $\bz$.  The normal to the sphere is its radius vector $\bz$. The normal to
$\Pi(\mathcal{A})$ is given in the following lemma with proof provided in the Appendix.  Gradients denoted by $\nabla$ are with respect to the data vector.
\begin{lemma}
\label{gradSTheoremReg}
Assume that conditions \ref{fullRank}-\ref{supReal}, \ref{as}, and \ref{regIn} hold and $\by\in \mathcal{A}$. Let 
$\nabla s(X,\by)$ denote the
gradient of the scale statistic with respect to the data vector evaluated at
$\by$.  Then $\nabla s(X,\by)\in \mc{C}^\perp(X)$ and is 
normal to $\Pi(\mathcal{A})$ at $\bz=Q\by$  in $\mc{C}^\perp(X)$.
\end{lemma}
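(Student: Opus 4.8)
The plan is to derive both conclusions from regression invariance \ref{regIn}, which states $s(X,\by+X\bv)=s(X,\by)$ for every $\bv\in\mathbb{R}^p$: the first claim is simply the infinitesimal version of this identity, and the second follows once $\Pi(\mathcal{A})$ is recognized as (a subset of) a single level set of $s$ restricted to $\mc{C}^\perp(X)$. Since $Q=I-XX^\top$ projects onto $\mc{C}^\perp(X)$ under the orthonormal-column convention afforded by \ref{fullRank}, the geometry reduces to a level-set normal computation inside $\mc{C}^\perp(X)$.

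For the first claim, I would fix an arbitrary $\bv\in\mathbb{R}^p$ and consider the map $t\mapsto s(X,\by+tX\bv)$. By \ref{regIn} this map is constant, so its derivative at $t=0$ vanishes; since $s$ is differentiable at $\by$ by condition \ref{as}, the chain rule gives $\nabla s(X,\by)^\top X\bv=0$. As $\bv$ is arbitrary, this yields $X^\top\nabla s(X,\by)=\bzero$, i.e. $\nabla s(X,\by)\in\mc{C}^\perp(X)$.

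For the second claim I would proceed in three short steps. First, observe that $\Pi(\mathcal{A})$ lies in a single level set of $s$: any $\bz\in\Pi(\mathcal{A})$ has the form $\bz=Q\by'=\by'-XX^\top\by'$ for some $\by'\in\mathcal{A}$, so $\bz=\by'+X\bv_0$ with $\bv_0=-X^\top\by'$, and \ref{regIn} gives $s(X,\bz)=s(X,\by')=s(X,\by_{obs})$. Second, differentiating the identity in \ref{regIn} with respect to $\by$ (holding $\bv$ fixed) shows the gradient is itself invariant, $\nabla s(X,\by+X\bv)=\nabla s(X,\by)$; taking $\bv=-X^\top\by$ gives $\nabla s(X,Q\by)=\nabla s(X,\by)$, so the gradient at $\bz=Q\by$ equals the gradient at $\by$. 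Third, for any smooth curve $\bz(t)\subset\Pi(\mathcal{A})$ with $\bz(0)=\bz$, the first step makes $s(X,\bz(t))$ constant, so $\nabla s(X,\bz)^\top\bz'(0)=0$; hence $\nabla s(X,\by)=\nabla s(X,\bz)$ is orthogonal to every tangent direction of $\Pi(\mathcal{A})$ at $\bz$. Combined with the first claim, this exhibits $\nabla s(X,\by)$ as a vector lying in $\mc{C}^\perp(X)$ and normal to $\Pi(\mathcal{A})$ at $\bz$.

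The main obstacle is not the algebra but the regularity bookkeeping. Condition \ref{as} only guarantees differentiability of $s$ almost surely, whereas $\mathcal{A}$ and $\Pi(\mathcal{A})$ are themselves Lebesgue-null sets in $\mathbb{R}^n$; I would therefore restrict attention to points $\by\in\mathcal{A}$ at which $s$ is differentiable, using \ref{supReal} together with \ref{fullRank} to argue this is the generic situation. To make ``normal to $\Pi(\mathcal{A})$'' fully rigorous one should also confirm that $\Pi(\mathcal{A})$ is a genuine $(n-p-1)$-dimensional manifold near $\bz$ with a well-defined one-dimensional normal line, which can be secured by the regular-value (implicit-function) theorem provided $\nabla s(X,\by)\neq\bzero$ there; this is precisely where the positivity and smoothness demanded in \ref{as} are used. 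Everything else is a direct consequence of the invariance identity.
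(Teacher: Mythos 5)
Your proposal is correct and follows essentially the same route as the paper: both derive $\nabla s(X,\by)\in\mc{C}^\perp(X)$ from the regression invariance \ref{regIn} via the chain rule (the paper differentiates $s(X,\by)=s(X,Q\by)$, you differentiate $t\mapsto s(X,\by+tX\bv)$, which is equivalent), and both obtain normality by noting that $\Pi(\mathcal{A})$ lies in a single level set of $s$, so directional derivatives along tangent directions at $\bz=Q\by$ vanish. Your added bookkeeping (gradient invariance $\nabla s(X,Q\by)=\nabla s(X,\by)$, and the caveats about almost-sure differentiability and the manifold structure of $\Pi(\mathcal{A})$) is a slightly more careful rendering of steps the paper leaves implicit, not a different argument.
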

As a result of the lemma, the contribution to the Jacobian of this attenuation is 
\begin{equation}
\label{cosine}
\cos(\gamma)=\frac{\nabla s(X,\by)^\top \bz}{\|\nabla
s(X,\by)\| \|\bz\|},
\end{equation}
where $\gamma$ is the angle between the two normal vectors.
This step is visualized in Figure~\ref{fig:stretchDeform} for the notional
location-scale example.  The figure pictures only $\mathcal{C}^{\perp}(X)$,
which in this case is a plane. The unit sphere (here, the
solid circle) is stretched to the dashed sphere, contributing
$r^{-(n-p-1)}$ to the Jacobian as seen in panel (a). In panel (b), the
dashed circle is transformed onto $\Pi(\mc A)$, contributing
$\cos(\gamma)$ to the Jacobian. The normal vectors in panel (b) are
orthogonal to the tangent vectors of $\Pi(\mc A)$ and the circle. 

\begin{figure}[t]
\centering
{\includegraphics[width=2.9in]{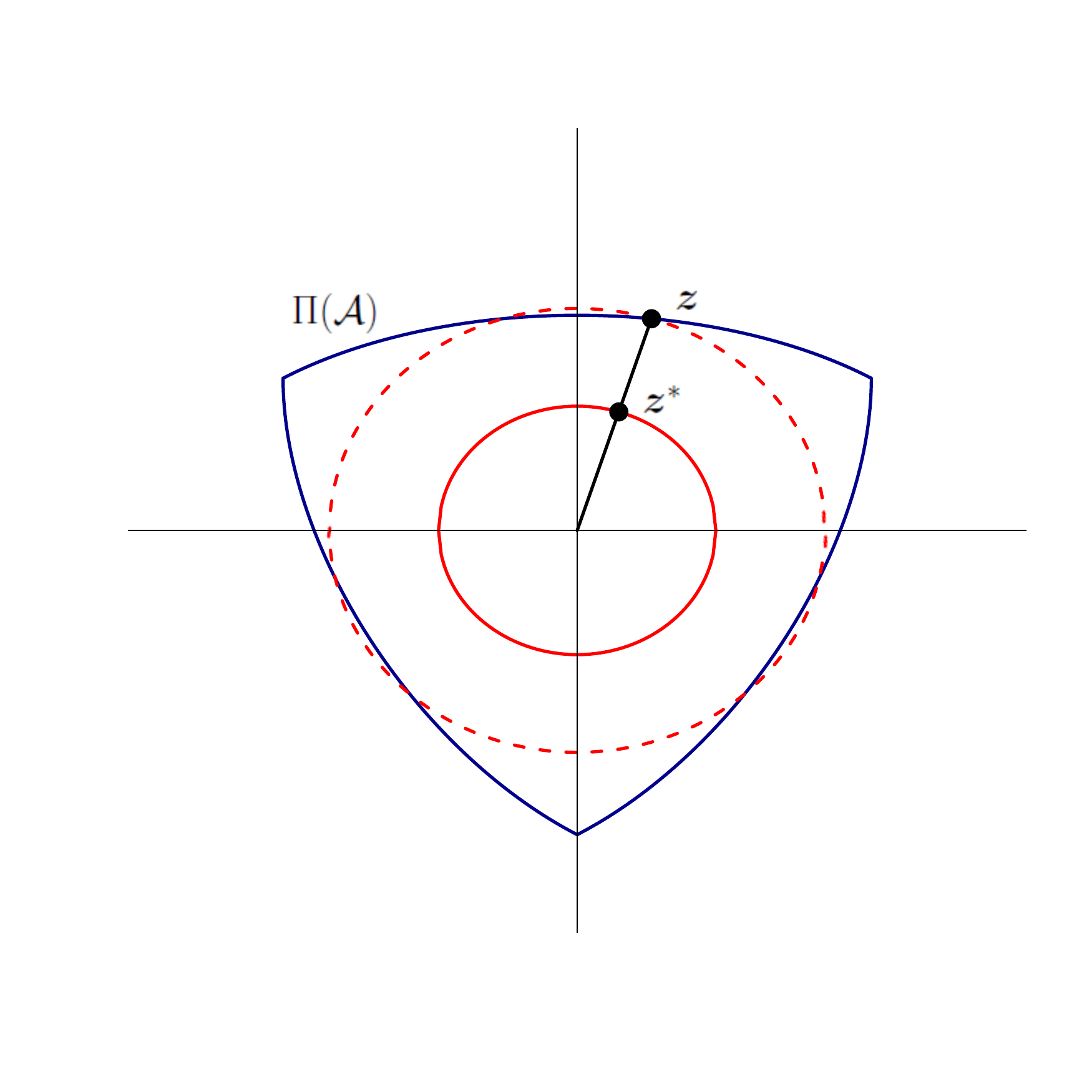}}
{\includegraphics[width=2.9in]{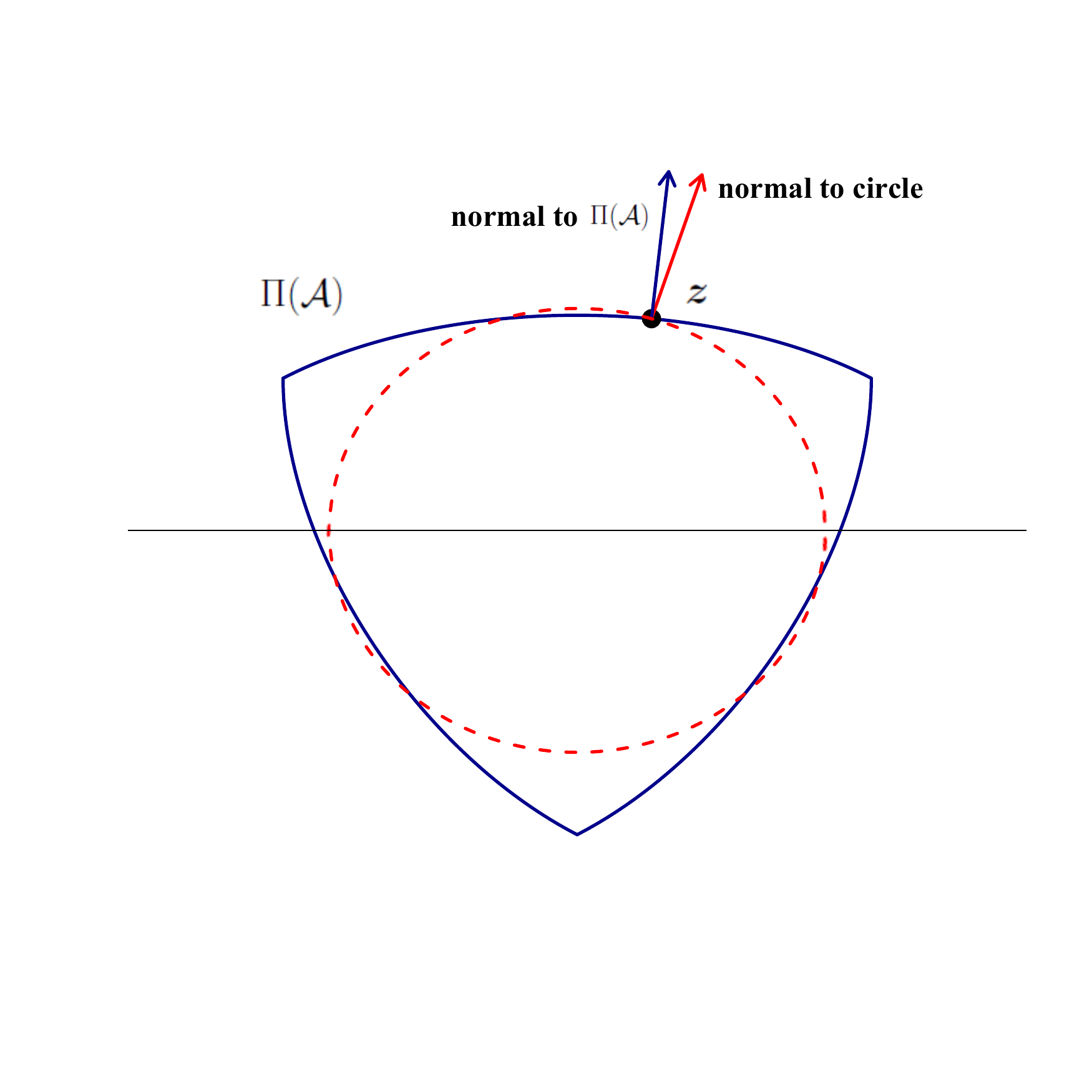}}
\caption{Visualization of the scaling from $z^{*}$ to $z$. Left: the first substep scales $z^{*}$ on the unit circle to the circle of radius $r = ||z||$, resulting in a change-of-variables transformation for the unit circle to a circle of radius $r$. The contribution to the Jacobian of this transformation is $r^{-(n-p-1)}$. Right: The second substep accounts for the the change-of-variables transformation from the circle of radius $r$ to $\Pi(\mathcal{A})$. The normal vectors to these two sets are used to calculate the contribution to the Jacobian of this part of the transformation are shown in the figure.}
\label{fig:stretchDeform}
\end{figure}

\vskip 0.05 in
\noindent
{\bf Shift from $\Pi(\mathcal{A})$ to $\mathcal{A}$} \\
The final piece of the Jacobian comes from the transformation from
$\Pi(\mathcal{A})$ to $\mathcal{A}$.  
This step involves a shift of
$\bz$ to $\by$ along the column space of $X$. Since the shift depends on 
$\bz$, the density on the set 
$\Pi(\mathcal{A})$ is deformed by the shift. The
contribution of this deformation to the Jacobian is, again,
the ratio of the infinitesimal volumes along $\Pi(\mathcal{A})$ at $\bz$ to the
corresponding volume along $\mathcal{A}$ at $\by$. 
The ratio is calculated by considering the volume of the
projection of a unit hypercube in the tangent space of $\mathcal{A}$
at $\by$ onto $\mc{C}^\perp(X)$.
Computational details are
given in the following lemmas and subsequent theorem. Proofs of the lemmas are given in the appendix and the theorem is a direct result of the lemmas. Throughout, let
$\mc T_{y}(\mc A)$ and $\mc T_{y}^{\perp}(\mc A)$ denote the tangent
space to $\mc A$ at $\by$ and its orthogonal complement, respectively. 
\begin{lemma}
\label{lem:basis}
Assume that conditions \ref{fullRank}-\ref{regEq} and \ref{regIn}-\ref{scaleEq2Reg} hold.  Then the $p+1$ gradient vectors 
$\nabla s(X,\by), \nabla b_1(X,\by),\dots, \nabla b_p(X,\by)$ form a
basis for $\mc T_{y}^\perp(\mc A)$ with probability one.
\end{lemma}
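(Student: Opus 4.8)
The plan is to recognize $\mathcal{A}$ as the common level set of the $p+1$ coordinate maps $b_1(X,\cdot),\dots,b_p(X,\cdot),s(X,\cdot)$ and to exploit the standard fact that the gradient of a defining function is normal to its level set. Since $\by\in\mathcal{A}$ satisfies $b_j(X,\by)=b_j(X,\by_{obs})$ and $s(X,\by)=s(X,\by_{obs})$, each of $\nabla b_1(X,\by),\dots,\nabla b_p(X,\by),\nabla s(X,\by)$ annihilates every tangent direction of $\mathcal{A}$ at $\by$ and therefore lies in $\mc T_{y}^{\perp}(\mc A)$. Because $\mc A$ is $(n-p-1)$-dimensional, $\mc T_{y}^{\perp}(\mc A)$ has dimension $p+1$, so it suffices to show that these $p+1$ gradients are linearly independent with probability one; conditions \ref{asb}--\ref{as} guarantee that the gradients exist for almost every $\by$, which is where the ``probability one'' qualifier enters.

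The heart of the argument is to read off the geometry of the gradients by differentiating the equivariance identities, assuming (via the footnote) that $X^{\top}X=I$. First I would differentiate the regression-equivariance identity \ref{regEq}, $\bb(X,\by+X\bv)=\bb(X,\by)+\bv$, with respect to $\bv$ at $\bv=\bzero$ to obtain $D\bb(X,\by)\,X=I_p$, i.e. $\nabla b_j(X,\by)^{\top}(X\mb e_k)=\delta_{jk}$, where $\mb e_k$ is the $k$th standard basis vector of $\mathbb{R}^p$. This says precisely that the orthogonal projection of $\nabla b_j(X,\by)$ onto $\mc{C}(X)$ equals the $j$th column $X\mb e_j$. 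Next, Lemma~\ref{gradSTheoremReg} already gives $\nabla s(X,\by)\in\mc{C}^{\perp}(X)$. Finally, to guarantee $\nabla s(X,\by)\neq\bzero$ I would differentiate the scale-equivariance identity \ref{scaleEq2Reg}, $s(X,a\by)=|a|\,s(X,\by)$, in $a$ at $a=1$; this Euler relation for a degree-one homogeneous map yields $\nabla s(X,\by)^{\top}\by=s(X,\by)$, which is strictly positive on $\mathcal{A}$ because $s(X,\by)=s(X,\by_{obs})>0$ there, forcing $\nabla s(X,\by)\neq\bzero$.

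With these three facts the linear independence becomes immediate bookkeeping. Given a relation $\alpha_0\nabla s(X,\by)+\sum_{j=1}^{p}\alpha_j\nabla b_j(X,\by)=\bzero$, projecting onto $\mc{C}(X)$ kills $\nabla s(X,\by)$ together with the $\mc{C}^{\perp}(X)$ components of the $\nabla b_j(X,\by)$ and leaves $\sum_{j=1}^{p}\alpha_j X\mb e_j=\bzero$; full column rank \ref{fullRank} then forces $\alpha_1=\dots=\alpha_p=0$. The relation collapses to $\alpha_0\nabla s(X,\by)=\bzero$, and $\nabla s(X,\by)\neq\bzero$ gives $\alpha_0=0$. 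Thus the $p+1$ gradients are independent, and being $p+1$ independent vectors in the $(p+1)$-dimensional space $\mc T_{y}^{\perp}(\mc A)$, they form a basis.

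I expect the main obstacle to be the middle step: correctly differentiating the equivariance identities to extract $D\bb(X,\by)\,X=I_p$ and the homogeneity relation $\nabla s(X,\by)^{\top}\by=s(X,\by)$, and in particular using the latter to rule out $\nabla s(X,\by)=\bzero$. Once the gradients are pinned to their respective pieces of the orthogonal decomposition $\mathbb{R}^n=\mc{C}(X)\oplus\mc{C}^{\perp}(X)$, the independence and the dimension count are routine. Care is also needed to confine all the ``almost sure'' differentiability hedges to the negligible set excluded by \ref{asb}--\ref{as}.
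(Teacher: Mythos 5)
Your proof is correct and follows essentially the same route as the paper's: both rest on the gradients being normal to the level set $\mc A$, on regression equivariance \ref{regEq} pinning the $\mc C(X)$-component of each $\nabla b_j(X,\by)$ to the $j$th column of $X$, on Lemma~\ref{gradSTheoremReg} placing $\nabla s(X,\by)$ in $\mc C^{\perp}(X)$, and on concluding independence by separating the $\mc C(X)$ and $\mc C^{\perp}(X)$ pieces (the paper packages your projection argument as a block-matrix rank computation after left-multiplying $[\nabla \bb(X,\by),\nabla s(X,\by)]$ by the stacked projections). Your explicit Euler-relation step $\nabla s(X,\by)^{\top}\by=s(X,\by)>0$ is a welcome elaboration of the paper's terser assertion that $\nabla s(X,\by)\neq\bzero$ follows from \ref{scaleEq2Reg}.
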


The lemma describes construction of a basis for $\mc T_{y}^\perp(\mc A)$, leading to a 
basis for $\mc T_{y}(\mc A)$.  Both of these bases can be orthonormalized.  
Let $A=[a_{1},\dots,a_{n-p-1}]$  and $B=[b_1,\dots,b_{p+1}]$ denote the 
matrices whose columns contain the orthonormal bases for  $\mc T_{y}(\mc A)$ and  $\mc T^{\perp}_{y}(\mc A)$, respectively.  
The columns in $A$ define a unit hypercube in $\mc T_{y}(\mc
  A)$ and their projections onto $\mc{C}^\perp(X)$ define a parallelepiped.
We defer construction of $A$ until later. 

\begin{lemma}
\label{lem:fullrank}
Assume that conditions \ref{fullRank}-\ref{regEq} and \ref{regIn}-\ref{scaleEq2Reg} hold.  
Then the $n\times (n-p-1)$ dimensional matrix $P=QA$ is of full column rank.
\end{lemma}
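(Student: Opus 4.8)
The plan is to reduce the full-column-rank claim for $P = QA$ to a transversality statement about tangent and column spaces. Since we normalize $X^\top X = I$, the matrix $Q = I - XX^\top$ is the orthogonal projection onto $\mc{C}^\perp(X)$, so its kernel is exactly $\mc{C}(X)$. The columns of $A$ already form an orthonormal basis of $\mc T_y(\mc A)$, so $A$ has full column rank $n-p-1$ and $\mc{C}(A) = \mc T_y(\mc A)$. If $Px = QAx = \bzero$ then $Ax \in \ker Q = \mc{C}(X)$, while $Ax$ ranges over all of $\mc T_y(\mc A)$; hence
\[
P \text{ has full column rank} \iff \mc T_y(\mc A) \cap \mc{C}(X) = \{\bzero\}.
\]
So the entire lemma comes down to showing that no nonzero vector in the column space of the design matrix is tangent to $\mathcal{A}$ at $\by$.

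To prove the intersection is trivial, I would take an arbitrary $w = X\bv$ lying in it and argue $w = \bzero$. Because $w \in \mc T_y(\mc A)$, it is orthogonal to $\mc T^\perp_y(\mc A)$, which by Lemma \ref{lem:basis} is spanned by $\nabla s(X,\by), \nabla b_1(X,\by), \dots, \nabla b_p(X,\by)$; in particular $(\nabla b_j(X,\by))^\top w = 0$ for each $j$. The key identity that makes this bite comes from differentiating the regression-equivariance condition \ref{regEq}, $\bb(X,\by + X\bv) = \bb(X,\by) + \bv$, with respect to $\bv$ at $\bv = \bzero$: writing $J_b$ for the $p \times n$ Jacobian whose $j$-th row is $(\nabla b_j)^\top$, the chain rule gives $J_b X = I_p$, i.e. $X^\top \nabla b_j = e_j$ for every $j$. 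Combining these, for each $j$,
\[
0 = (\nabla b_j(X,\by))^\top w = (\nabla b_j)^\top X \bv = e_j^\top \bv = v_j,
\]
so $\bv = \bzero$ and therefore $w = X\bv = \bzero$. By the reduction above, $P$ has full column rank on the same almost-sure event where the basis of Lemma \ref{lem:basis} (and the differentiability conditions \ref{asb}--\ref{as}) is available.

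I expect the only delicate points to be the bookkeeping in the first paragraph—verifying carefully that $A$ has full column rank and that $\ker Q = \mc{C}(X)$ under the orthonormal normalization of $X$—together with cleanly extracting the relation $X^\top \nabla b_j = e_j$ from regression equivariance; once that identity is in hand, the remainder is a one-line orthogonality computation. It is worth noting that the scale gradient plays no role here: only the coefficient gradients and their equivariance-induced pairing with $\mc{C}(X)$ are needed, which is why the proof goes through without invoking condition \ref{scaleEq2Reg} for this particular step.
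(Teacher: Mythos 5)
Your proof is correct and takes essentially the same route as the paper's: both reduce the full-column-rank claim for $P=QA$ to showing $\mc T_{y}(\mc A)\cap \mc C(X)=\{\bzero\}$, and both rule out a nonzero $X\bv$ in this intersection by differentiating the regression-equivariance property \ref{regEq}, which forces the derivative of $\bb$ in the direction $X\bv$ to be $\bv\neq\bzero$. Your identity $X^\top \nabla b_j(X,\by)=e_j$ combined with orthogonality of tangent vectors to the gradients is simply the dual phrasing of the paper's statement that the directional derivative of $\bb$ along $X\bv$ is $\bv$, so no new idea is involved.
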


As a consequence of this lemma, 
the parallelepiped spanned by the columns of $P$ is not
degenerate (it is $n-p-1$ dimensional), and its volume
is given by
\begin{equation}
\label{eq:volume}
\text{Vol} (P) := \sqrt{\text{det}(P^\top P)}=\prod_{i=1}^{r} \sigma_i
\end{equation}
where $r=\text{rank} (P)=n-p-1$ and $\sigma_1\geq
\sigma_2\geq\dots\geq\sigma_r>0$ are the singular values of $P$ (e.g.,
\cite{miao1992}). 
Combining Lemmas \ref{lem:basis} and \ref{lem:fullrank} above leaves us with the following result concerning the calculation of the desired Jacobian.  
\begin{theorem}
\label{Jacobian}
Assume that conditions \ref{fullRank}-\ref{regEq} and \ref{regIn}-\ref{scaleEq2Reg} hold.  Then the
Jacobian of the transformation from the distribution along 
$\Pi(\mc A)$ to that along $\mc A $ is equal to the volume given in \eqref{eq:volume}.
\end{theorem}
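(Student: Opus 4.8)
The plan is to recognize the requested Jacobian as the factor by which the orthogonal projection $Q=I-XX^\top$ rescales $(n-p-1)$-dimensional volume when it carries $\mc T_{y}(\mc A)$ onto the tangent space of $\Pi(\mc A)$ at $\bz=Q\by$, and then to evaluate that factor using the two preceding lemmas. The crucial simplification I would record first is that, although the shift $\by=\bz+X(\bb(X,\by_{obs})-\bb(X,\bz))$ has a complicated derivative in $\bz$, its inverse restricted to these sets is simply the linear projection: applying $Q$ and using $QX=0$ together with $Q\bz=\bz$ for $\bz\in\mc C^\perp(X)$ yields $Q\by=\bz$. Hence the derivative of the inverse map on the tangent spaces is exactly $Q$ restricted to $\mc T_{y}(\mc A)$, a constant linear map, so the entire Jacobian collapses to the volume-scaling of this single projection.

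With this in hand, I would invoke the change-of-variables formula for densities supported on submanifolds: writing $p_{\bz}$ for the density along $\Pi(\mc A)$ and $p_{\by}$ for the induced density along $\mc A$, conservation of mass under the bijection gives $p_{\by}(\by)\,dV_{\mc A}=p_{\bz}(\bz)\,dV_{\Pi(\mc A)}$, so the Jacobian is the ratio $dV_{\Pi(\mc A)}/dV_{\mc A}$, which is exactly the volume $Q$ assigns to the image of a unit volume element of $\mc T_{y}(\mc A)$. To make this explicit, Lemma \ref{lem:basis} supplies the almost sure basis $\nabla s,\nabla b_{1},\dots,\nabla b_{p}$ of $\mc T_{y}^\perp(\mc A)$; orthonormalizing its orthogonal complement produces the orthonormal basis $A=[a_{1},\dots,a_{n-p-1}]$ of $\mc T_{y}(\mc A)$. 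The unit hypercube spanned by the columns of $A$ has unit volume and represents $dV_{\mc A}$, and its image under $Q$ is the parallelepiped spanned by the columns of $P=QA$. Lemma \ref{lem:fullrank} guarantees that $P$ has full column rank $n-p-1$, so this parallelepiped is non-degenerate and its volume is $\sqrt{\det(P^\top P)}=\prod_{i=1}^{n-p-1}\sigma_{i}$, namely $\text{Vol}(P)$ from \eqref{eq:volume}. Thus $dV_{\Pi(\mc A)}=\text{Vol}(P)\,dV_{\mc A}$ and the Jacobian equals $\text{Vol}(P)$, as claimed.

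The main obstacle is this first step rather than the subsequent bookkeeping: one must justify that the change of variables between two $(n-p-1)$-dimensional manifolds embedded in $\mathbb{R}^n$ reduces to the linear volume-scaling of a single map on the tangent space at $\by$. This is the content of the area-formula formalism for submanifolds, and the point that makes it routine here is precisely the observation above that the inverse shift is the constant projection $Q$, so no derivative terms from $\bb(X,\bz)$ survive on $\mc T_{y}(\mc A)$ and the scaling factor is literally $\text{Vol}(QA)$. Once this identification is secured, Lemmas \ref{lem:basis} and \ref{lem:fullrank} and the determinant formula \eqref{eq:volume} assemble the result immediately.
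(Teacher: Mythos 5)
Your proof is correct and follows essentially the same route as the paper, which presents the theorem as a direct consequence of Lemmas \ref{lem:basis} and \ref{lem:fullrank}: the Jacobian is the ratio of infinitesimal volumes, computed as the volume of the projection under $Q$ of the unit hypercube spanned by the orthonormal tangent basis $A$, namely $\text{Vol}(P)=\sqrt{\det(P^\top P)}$. Your observation that the inverse of the shift map is exactly the linear projection $Q$ (since $QX=0$ and $Q\bz=\bz$), so that no derivative terms of $\bb(X,\bz)$ survive on the tangent space, is a welcome piece of added rigor for a step the paper leaves implicit, but it does not change the argument's structure.
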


\vskip 0.05 in
\noindent
{\bf The proposal density} \\
Putting all the pieces of the Jacobian together we have the following result. Any dependence on other variables, including current states in the Markov chain, is made implicit. 
\begin{theorem} 
Assume that conditions \ref{fullRank}-\ref{scaleEq2Reg} hold.  Let $\bz^{*}$ be sampled on the unit sphere in $\mc {C}^\perp (X)$ with density $p(\bz^{*})$.  Using the transformation of $\bz^*$ to $\by\in \mc A$ described in Theorem \ref{Transformation}, the density of $\by$ is
\begin{equation}
\label{dens:ystst}
p(\by)=p(\bz^*) r^{-(n-p-1)} \cos(\gamma)\text{Vol} (P)
\end{equation}
where $r={s(X,\boldsymbol{y}_{obs})}/{s(X,  \boldsymbol{z}^{*})}$,
and $\cos(\gamma)$ and $\text{Vol} (P)$ are as in equations \eqref{cosine} and \eqref{eq:volume}, respectively. 
\end{theorem}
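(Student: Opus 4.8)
The plan is to treat this theorem as the assembly of the three volume-element computations already in hand, justified by a single change-of-variables (area) formula applied to the composite map $h$ of Theorem~\ref{Transformation}. The densities $p(\bz^*)$ and $p(\by)$ live on the $(n-p-1)$-dimensional embedded manifolds $\mathbb{S}$ and $\mathcal{A}$, so throughout I would interpret them as densities with respect to the $(n-p-1)$-dimensional Hausdorff (surface) measure $\mathcal{H}^{n-p-1}$ on each set. The foundation is conservation of probability: for any measurable $S\subseteq\mathbb{S}$, the mass $\int_S p(\bz^*)\,d\mathcal{H}^{n-p-1}$ must equal $\int_{h(S)} p(\by)\,d\mathcal{H}^{n-p-1}$. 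Because Theorem~\ref{1to1onto} guarantees that $h:\mathbb{S}\to\mathcal{A}$ is one-to-one and onto, and conditions \ref{asb}--\ref{scaleEq2Reg} make $h$ continuously differentiable with nonvanishing generalized Jacobian, the area formula converts this identity into the pointwise relation $p(\by)=p(\bz^*)/J_h(\bz^*)$, where $J_h$ is the factor by which $h$ scales $(n-p-1)$-dimensional volume. It then remains only to show $J_h^{-1}=r^{-(n-p-1)}\cos(\gamma)\,\text{Vol}(P)$.

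Next I would factor $h$ exactly as in the two-step construction preceding the statement: $h=\Phi\circ g$, where $g:\mathbb{S}\to\Pi(\mathcal{A})$ is the radial scaling $\bz^*\mapsto \bz=r\bz^*$ with $r=s(X,\by_{obs})/s(X,\bz^*)$ (which by scale equivariance \ref{scaleEq2Reg} lands on $\Pi(\mathcal{A})$), and $\Phi:\Pi(\mathcal{A})\to\mathcal{A}$ is the shift $\bz\mapsto\by=\bz+X(\bb(X,\by_{obs})-\bb(X,\bz))$ along $\mc{C}(X)$. Since the generalized Jacobian is multiplicative under composition, $J_h=J_g\cdot J_\Phi$ and hence $J_h^{-1}=J_g^{-1}J_\Phi^{-1}$. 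For $g$ I would further split the local volume ratio into its radial and tangential parts: comparing the surface measures of the unit sphere and the sphere of radius $r$ through $\bz$ gives the factor $r^{-(n-p-1)}$, while the radial projection of that sphere onto $\Pi(\mathcal{A})$ tilts the tangent plane by the angle $\gamma$ between the radial normal $\bz$ and the normal $\nabla s(X,\by)$ identified in Lemma~\ref{gradSTheoremReg}, contributing $\cos(\gamma)$ as in \eqref{cosine}; thus $J_g^{-1}=r^{-(n-p-1)}\cos(\gamma)$. For $\Phi$, Theorem~\ref{Jacobian} together with \eqref{eq:volume} gives $J_\Phi^{-1}=\text{Vol}(P)$ directly. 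Substituting these into $J_h^{-1}=J_g^{-1}J_\Phi^{-1}$ yields \eqref{dens:ystst}.

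The substitutions are routine; the main obstacle is making the change of variables rigorous on manifolds rather than on open subsets of $\mathbb{R}^n$. Specifically, I would verify that each of $g$ and $\Phi$ is a $C^1$ diffeomorphism between $(n-p-1)$-dimensional manifolds—using \ref{asb} and \ref{as} for smoothness and Theorem~\ref{1to1onto} for bijectivity—so that the generalized area formula and its chain rule apply, and that all three generalized Jacobians are almost surely nonzero (which follows from $s(X,\bz^*)>0$, transversality giving $\cos(\gamma)\neq 0$, and Lemma~\ref{lem:fullrank}). The one genuinely delicate point is the decomposition of $g$: because $r$ varies over $\mathbb{S}$, the intermediate ``sphere of radius $r$'' is not a single fixed sphere, so I would phrase the radial-plus-tilt split as a pointwise statement about the differential $dg$ at each $\bz^*$, expressing it in an orthonormal frame adapted to the radial direction and confirming that its generalized Jacobian determinant factors as $r^{n-p-1}/\cos(\gamma)$. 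Once that frame computation is recorded, the three established factors multiply to give the claimed density and the proof is complete.
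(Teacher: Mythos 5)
Your proposal is correct and takes essentially the same route as the paper: the paper gives no separate formal proof of this theorem, simply "putting the pieces together" by multiplying the factor $r^{-(n-p-1)}$ from the radial scaling, the factor $\cos(\gamma)$ from deforming the sphere onto $\Pi(\mathcal{A})$ (Lemma 4.2), and the factor $\text{Vol}(P)$ from the shift (Theorem 4.5), exactly the decomposition $h=\Phi\circ g$ you use. Your added machinery (Hausdorff measure, the area formula, multiplicativity of generalized Jacobians, and the pointwise treatment of the varying radius $r$) is a rigorization of that same argument rather than a different approach.
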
 

Some details for computing the needed quantities are worth further explanation. Computing $\text{Vol} (P)$ involves finding an orthornormal matrix $A$ whose columns span $\mc T_{y}(\mc A)$. This matrix can be found by supplementing $B$ with a set of $n$ linearly independent columns on the right, and applying Gram-Schmidt orthonormalization.  The computational complexity of this step is $\mc O(n^3)$.  This is infeasibly slow when $n$ is large because it must be repeated at each iterate of the MCMC when a complete data set is drawn.  However, using results related to \textit{principal angles} found in \cite{miao1992} the volume \eqref{eq:volume} can be computed using only $B$. $B$ is constructed by Gram-Schmidt orthogonalization of $\nabla s(X,\by), \nabla b_1(X,\by),\dots, \nabla b_p(X,\by)$, reducing the computational complexity to $\mc O(np^2)$--a 
considerable reduction in computational burden when $n \gg p$. 
The following corollary formally states how computation of $A$ can be circumvented. 
\begin{corollary}
\label{theorem:sings}
Let $U$ be a matrix whose columns form an orthonormal basis for $\mc C (X)$ and set $Q=WW^{\top}$ where the columns of $W$ form an orthonormal basis for $\mc{C}^\perp(X)$. Then the non-unit singular values of $U^\top B$ are the same as the non-unit singular values of $W^\top A$.
\end{corollary}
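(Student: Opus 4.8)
The plan is to recognize $U^\top B$ and $W^\top A$ as two off-diagonal blocks of a single $n \times n$ orthogonal matrix, and then to extract the singular-value identity from the orthogonality relations among its blocks; this is essentially the principal-angle / CS-decomposition content of \cite{miao1992}, rederived in exactly the form needed here. First I would assemble the two orthonormal frames. Because $X$ has full column rank (\ref{fullRank}), $\dim \mc C(X)=p$ and $\dim \mc C^\perp(X)=n-p$, so $[U,W]$ is an $n\times n$ orthogonal matrix. By Lemma \ref{lem:basis}, $B$ is an orthonormal basis for the $(p+1)$-dimensional space $\mc T_y^\perp(\mc A)$ and $A$ an orthonormal basis for its complement $\mc T_y(\mc A)$, so $[A,B]$ is likewise $n\times n$ orthogonal. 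Hence $M:=[U,W]^\top[A,B]$ is orthogonal, and in block form
\[
M=\begin{pmatrix} U^\top A & U^\top B\\ W^\top A & W^\top B\end{pmatrix}=:\begin{pmatrix} M_{11}&M_{12}\\ M_{21}&M_{22}\end{pmatrix},
\]
so that $M_{12}=U^\top B$ and $M_{21}=W^\top A$ are precisely the blocks to be compared.

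Next I would read off two Gram identities. Equating the lower-right block of $M^\top M$ to $I_{p+1}$ gives $M_{12}^\top M_{12}=I_{p+1}-M_{22}^\top M_{22}$, and equating the lower-right block of $MM^\top$ to $I_{n-p}$ gives $M_{21}M_{21}^\top=I_{n-p}-M_{22}M_{22}^\top$. In each identity the two symmetric summands commute (one is $I$ minus the other), so their eigenvalues pair up in a common eigenbasis: a squared singular value $\sigma^2$ of $M_{12}$ equals $1-\mu$ for an eigenvalue $\mu$ of $M_{22}^\top M_{22}$, and likewise a squared singular value of $M_{21}$ equals $1-\mu'$ for an eigenvalue $\mu'$ of $M_{22}M_{22}^\top$. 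A singular value is non-unit exactly when its paired eigenvalue is nonzero, so the squared non-unit singular values of $M_{12}$ form the multiset $\{1-\mu:\mu\text{ a nonzero eigenvalue of }M_{22}^\top M_{22}\}$ and those of $M_{21}$ form $\{1-\mu':\mu'\text{ a nonzero eigenvalue of }M_{22}M_{22}^\top\}$, counted with multiplicity. I would then close the argument with the elementary fact that $M_{22}^\top M_{22}$ and $M_{22}M_{22}^\top$ share their nonzero eigenvalues with identical multiplicities, which forces the two multisets to coincide.

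The main obstacle is bookkeeping rather than any deep step: the blocks have different shapes ($M_{12}$ is $p\times(p+1)$ while $M_{21}$ is $(n-p)\times(n-p-1)$), so they generally carry different numbers of trivial singular values, and the spectra line up only after the unit singular values are discarded, equivalently only after restricting to the \emph{nonzero} eigenvalues of the two Gram matrices. I would therefore be careful to state the eigenvalue pairing with multiplicities and to note that the unit singular values are exactly the ones whose multiplicities may differ (they correspond to the zero eigenvalues, whose counts differ by the dimension gap), which is what makes the ``non-unit'' qualifier indispensable. This is also what renders the corollary useful: since $Q=WW^\top$ with $W$ an isometry we have $\sigma_i(QA)=\sigma_i(W^\top A)$, and the discarded unit singular values contribute a factor of one to $\text{Vol}(P)=\prod_i\sigma_i(W^\top A)$, so the volume in \eqref{eq:volume} can be evaluated from $U^\top B$, i.e.\ from $B$ alone, without ever forming $A$.
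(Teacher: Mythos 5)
Your proof is correct, but it takes a genuinely different route from the paper's. The paper's proof is essentially an appeal to the literature: it restates Lemma 1 and Theorem 3 of \cite{miao1992} (singular values of $Q_M^\top Q_L$ are cosines of the principal angles between the subspaces $L$ and $M$; the nonzero principal angles between $L$ and $M$ coincide with those between $L^\perp$ and $M^\perp$), identifies $L=\mc T_{y}(\mc A)$, $M=\mc C^\perp(X)$, $Q_L=A$, $Q_M=W$, and reads off the conclusion. You instead rederive that duality from scratch: you form the orthogonal matrix $[U,W]^\top[A,B]$, extract the Gram identities $M_{12}^\top M_{12}=I_{p+1}-M_{22}^\top M_{22}$ and $M_{21}M_{21}^\top=I_{n-p}-M_{22}M_{22}^\top$ from $M^\top M=I$ and $MM^\top=I$, and finish with the fact that $M_{22}^\top M_{22}$ and $M_{22}M_{22}^\top$ share their nonzero eigenvalues with multiplicity; this is precisely the CS-decomposition mechanism underlying Miao and Ben-Israel's theorem, so your argument makes the corollary self-contained and purely elementary, at the cost of being longer, while the paper's buys brevity by leaning on the reference. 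One small imprecision, which you largely anticipate in your bookkeeping remarks: as literally stated, each of your two multiset identities is off by one structural zero, since $M_{12}^\top M_{12}$ (size $p+1$) and $M_{21}M_{21}^\top$ (size $n-p$) each have one more eigenvalue than the corresponding block has singular values, and that extra eigenvalue is necessarily $0$ (equivalently, $\mu=1$ always occurs in the relevant Gram spectrum); the correct statements read, e.g., $\{\text{non-unit squared singular values of }M_{12}\}\uplus\{0\}=\{1-\mu:\mu\text{ a nonzero eigenvalue of }M_{22}^\top M_{22}\}$. Because the same extra zero appears on both sides of the comparison, it cancels and your conclusion stands; your closing observation that $\sigma_i(QA)=\sigma_i(W^\top A)$ and that discarded unit singular values contribute factors of one to $\text{Vol}(P)$ matches exactly how the paper uses the corollary.
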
 
\noindent The lemma implies that $\text{Vol} (P)$ is the product of the singular values of $U^\top B$. 

Second, the gradients of $\nabla s(X,\by), \nabla b_1(X,\by),\dots, \nabla b_p(X,\by)$ are easily computed. For example, below we consider M-estimators defined by the estimating equations:
\begin{eqnarray}
\label{Mest}
 \sum_{i=1}^n \psi\left(\frac{y_i - x_{i}^{\top}\bb(\by,X)}{s(\by,X)}\right)= & 0 \\
 \sum_{i=1}^n \chi\left(\frac{y_i - x_{i}^{\top}\bb(\by,X)}{s(\by,X)}\right)= & 0, \nonumber 
\end{eqnarray} 
where $\psi$ and $\chi$ are almost surely differentiable. The gradients can be found by differentiating this system of equations with respect to each $y_{i}$. In theory, finite differences could also be used as an approximation if needed.

\section{Simulated Data}
\label{simData}
We study the performance of restricted likelihood methods in a hierarchical setting where the data are contaminated with outliers. Specifically, simulated data come from the following model:\begin{align}
\label{gensim2}
\begin{split}
& \theta_{i}  \sim   N(\mu, \tau^{2}),  \ i = 1, 2, \dots, 90  \\ 
& y_{ij} \sim (1-p_{i})N(\theta_{i}, \sigma^{2}) + p_{i}N(\theta_{i}, m_{i}\sigma^{2}),\  j = 1, 2,..., n_{i}
\end{split}
\end{align}
with $\mu = 0, \tau^{2} = 1, \sigma^{2} = 4$. The values of $p_{i}, m_{i}$, and $n_{i}$ depend on the group and are formed using 5 replicates of the full factorial design over factors $p_{i},m_{i},n_{i}$ with levels $p_{i} = .1, .2, .3$, $m_{i} = 9, 25$, and $n_{i} = 25, 50, 100$. This results in 90 groups that have varying levels of outlier contamination and sample size. We wish to build models that offer good prediction for the good portion of data within each group. The full model for fitting is a corresponding normal model without contamination:
\begin{equation}
\label{fullsim2}
\begin{split}
& \theta_{i}\sim N(\mu, \tau^{2}), \  \sigma^{2}_{i} \sim IG(a_{s}, b_{s}),  \ i = 1, 2, \dots, 90, \\ 
& y_{ij}\sim  N(\theta_{i},\sigma^{2}_{i}), \ j = 1, 2, \dots, n_{i}.
\end{split}
\end{equation}
For the restricted likelihood versions we condition on robust M-estimators of location and scale in each group: $T_{i}(y_{i1}, \dots, y_{in_{i}}) = (\hat\theta_{i}, \hat\sigma^{2}_{i}), i = 1, 2, ..., 90$.  These estimators are solutions to equation \eqref{Mest} (where $x_{i}\equiv 1$) with user specified $\psi$ and $\chi$ functions designed to discount outliers. The two versions use Huber's and Tukey's $\psi$ function, while both versions use Huber's $\chi$ function. The tuning parameters associated with these functions are chosen so that the estimators are $95\%$ efficient under normally distributed data. These classical M-estimators are commonly used in robust regression settings  \citep{huber2009}.  

To complete the specification of model \eqref{fullsim2}, the hyperparameters $\mu, \tau^{2}, a_{s}$, and $b_{s}$ must be given priors or fixed. The joint prior density for $\mu$ and $\tau^{2}$ is improper and proportional to $\tau^{-2}$. The pair $a_{s}$ and $b_{s}$ are fixed to a variety of values representing different levels of prior knowledge. For each pair, we set $b_{s} = 4a_{s}c$ resulting in a prior mean for each $\sigma^{2}_{i}$ of $\frac{4ca_{s}}{a_{s}-1}, \ a_{s} >1$. The precision is $\frac{(a_{s} -1)^{2}(a_{s}-2)}{(4ca_{s})^{2}}$, meaning larger $a_{s}$ and smaller $c$ result in a more informative prior. With $c = 1$ the shrinkage (for large $a_{s}$) is to the true value of $\sigma^{2} = 4$. We consider $a_{s} = 1.25,  5, 10$ and $c = 0.5, 1, 2$ for a total of nine different priors whose densities are displayed in Figure \ref{fig:priors}. The vertical dashed line is at the known true value of $\sigma^{2} = 4$.

\begin{figure}[t]
\centering
\includegraphics[width = 4in]{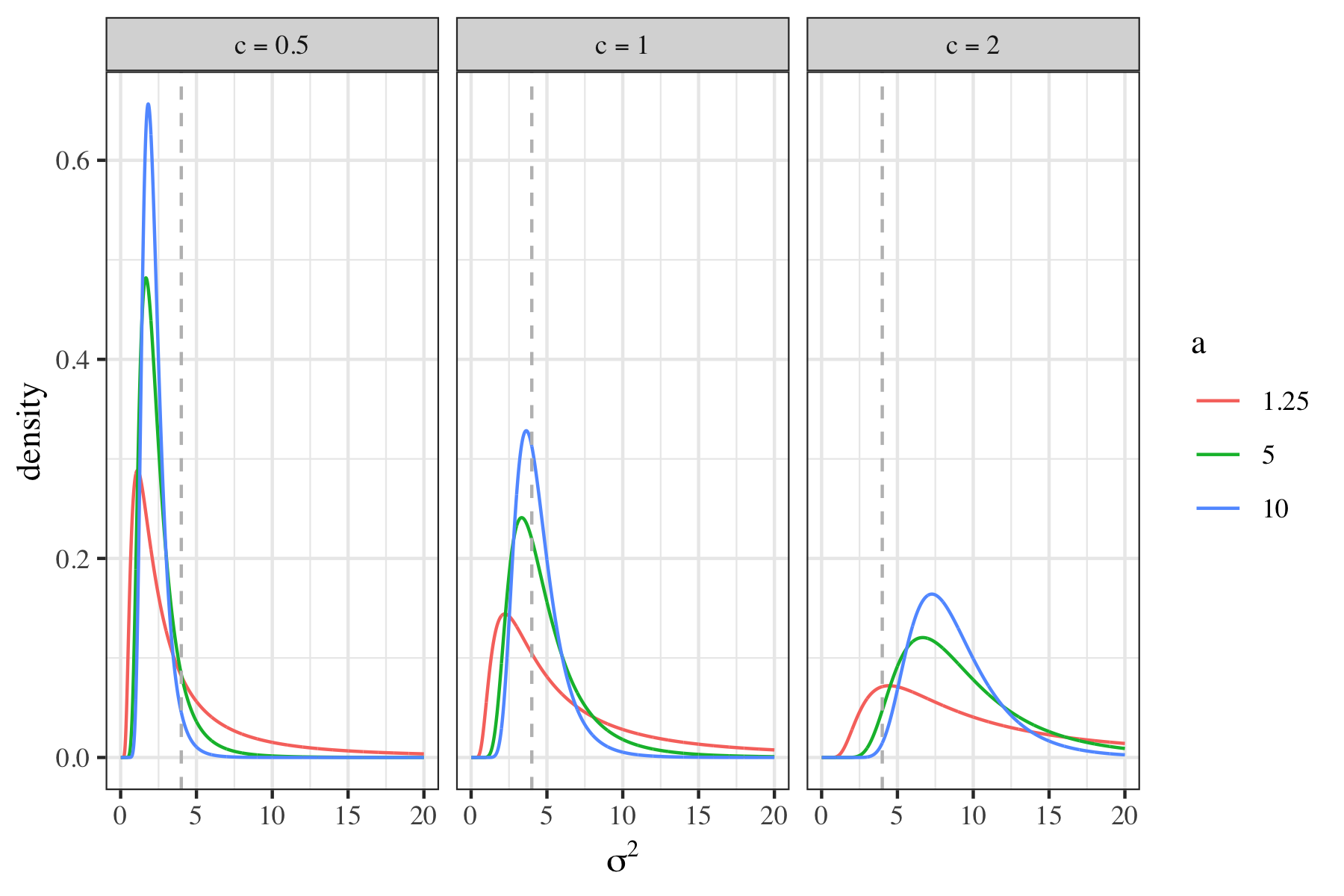}
\caption{The nine different inverse gamma priors used for the group level variance parameters $\sigma_{j}^{2}$ in the simulation. The shape parameter values are $a_{s} = 1.25, 5, 10$. The scale parameter values are $b_{s} = 4a_{s}c$ with $c = 0.5, 1, 2$. The vertical dashed line in each panel is at the true value of $\sigma^{2} = 4$ and is the variance of the good portion of the data in the simulation.}
\label{fig:priors}
\end{figure}

$K = 30$ data sets are generated from \eqref{gensim2}. For each data set and each pair $(a_{s}, c)$, the Bayesian models are fit using MCMC. The MCMC for the restricted likelihood version requires no computational details other than those described for the traditional Bayesian model in Section \ref{BayesLinMod}. This is because there are conditioning statistics for each group and the model's conditional independence between the groups allows the data augmentation described earlier to be performed independently within each group. That is, there is a separate Gibbs step for each group to generate the group level data matching the statistics for that group. 

To assess predictive capability, the models are compared using Kullback-Leibler (KL) divergence from the distribution of good data to the posterior predictive distribution. Specifically, for the $i^{th}$ group of the $k^{th}$ simulated data set $\by_{k}$ compute:
\begin{equation}
\label{kl}
KL^{(M)}_{ik} = \int \log \frac{f(\tilde y | \theta_{i}, \sigma^{2})}{f_{i}(\tilde y | M, \by_{k})}  f(\tilde y | \theta_{i}, \sigma^{2}) \ dy
\end{equation}
where $M$ indexes the fitting model and $f(\tilde y | \theta_{i}, \sigma^{2}) = N(\tilde y | \theta_{i}, \sigma^{2})$, the normal density function with (known) mean $\theta_{i}$ and variance $\sigma^{2}$, evaluated at $\tilde y$. For the Bayesian models, ${f_{i}(\tilde y | M, \by_{k})} = \int f(\tilde y |\theta_{i}, \sigma_{i}^{2}) \pi(\theta_{i}, \sigma^{2}_{i} | M, \by_{k})d\theta_{i}d\sigma^{2}_{i}$ where $\pi(\theta_{i}, \sigma^{2}_{i} | M, \by_{k})$ is the posterior for the $i^{th}$ group model parameters under model $M$ for the $k^{th}$ data set. $M$ denotes either the full normal theory model \eqref{fullsim2} or one of the two restricted likelihood versions, along with specified $a_{s}$ and $c$. For the classical robust fits, we set $f_{i}(\tilde y | M, \by_{k}) = N(\tilde y|\hat\theta_{i}, \hat\sigma^{2}_{i})$ as a groupwise plug-in estimator for the predictive distribution. The classical fits are computed separately for each group with no consideration of the hierarchical structure between the groups. The overall mean $\overline{KL}^{(M)}_{{\cdot}{\cdot}} = \frac{1}{90K} \sum_{k = 1}^{K} \sum_{i=1}^{90} KL^{(M)}_{ik}$ is used to compare the models, where smaller means correspond to better fits. Sampling variation is summarized with the standard error between the $K=30$ replicates in the simulation:  $SE(\overline{KL}^{(M)}_{{\cdot} k}) = \sqrt{\frac{1}{K(K-1)}\sum_{k = 1}^{K} (\overline{KL}^{(M)}_{{\cdot} k} - \overline{KL}^{(M)}_{{\cdot}{\cdot}})^{2}}$ where $\overline{KL}^{(M)}_{{\cdot} k} = \frac{1}{90}\sum_{i = 1}^{90} KL^{(M)}_{ik}$.
 
Figure \ref{kl_sim} displays $\overline{KL}^{(M)}_{{\cdot}{\cdot}}$  with error bars plus/minus one $SE(\overline{KL}^{(M)}_{{\cdot} k})$ for each $a_{s} = 1.25,  5, 10$ and $c = 0.5, 1, 2$. The values of $a_{s}$ and $c$, do not affect the classical robust linear models. The average KL for the normal theory models ranges from $0.22$ to $0.3$ which is much worse than the robust methods and hence is left out of the figure. For $c = 0.5$ and $c = 1$, the results favor the restricted likelihood methods with a slight advantage to the use of Tukey's location estimator over Huber's. This is likely due to the fact that Tukey's estimator essentially trims extreme outliers in the estimation procedure while Huber's estimator discounts them \citep{huber2009}. 

The choice of $c = 2$ corresponds to a particularly poor prior distribution.  The prior has substantial mass above $\sigma^{2} = 4$, with prior means for $\sigma^2$ from $8.9$ to $32$ as $a_s$ varies.  Additionally, the tuning parameters chosen for the location and scale estimators result in an upward bias in the estimate of $\sigma^{2}$. This bias depends on $m$ and $p$. For example, for $m = 9$ and $p = .1$, Huber's version converges to roughly $4.8$ as $n$ grows.  The bias is greater for more severe levels of contamination.  The alignment of biases in prior distribution and in likelihood from the summary statistic (when applied to the contaminated data) inflates the estimate of scale.  Not surprisingly, a poor prior distribution whose weakness matches the weakness in the likelihood results in poorer inference.  In this case, poorer than the classical estimators. 
 
\begin{figure}[t]
\centering
\includegraphics[width = 4in]{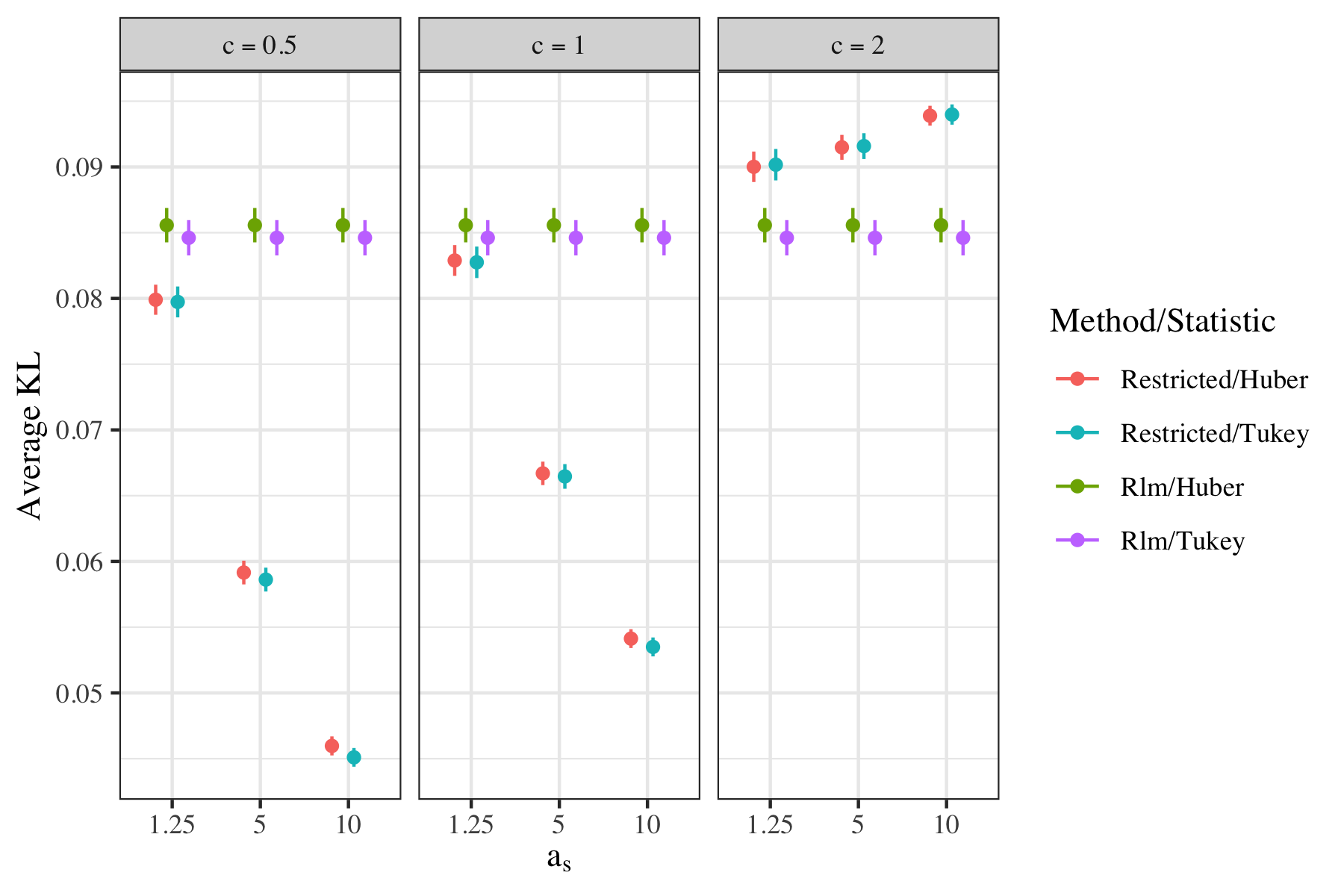}
\caption{Average KL-divergence plus/minus one standard error for each value of $a_{s}$ and $c$ ($\overline{KL}^{(M)}_{{\cdot}{\cdot}} \pm SE(\overline{KL}^{(M)}_{{\cdot} k})$). Smaller values represent better fits. The panels correspond to $c = 0.5$ (left), $c=1$ (middle), and $c=2$ (right), with the values of $a_{s}$ on the horizontal axis. The average KL  for the normal theory model ranges from $0.22$ to $0.3$ and is left out of the figure.}
\label{kl_sim}
\end{figure}

It is also interesting to consider the effects of factors $n$, $p$, and $m$.  We present the results for a single prior ($a_s = 5$ and $c = 1$).  For each simulation $k$, the main effect averages of $KL^{(M)}_{ik}$  are found for each factor $n$, $p$, and $m$. Figure \ref{kl_mnp} displays the average of these main effects over the $K = 30$ simulations along  with error bars plus/minus one standard error. For each group $n$, $p$, and $m$, the Bayesian restricted likelihood versions have better (lower) average KL divergences than do the classical methods. As expected, the average KL gets larger (worse) as the contamination gets more severe (larger $m$ or larger $p$) and the average KL gets smaller (better) as the sample size $n$ grows.  The advantage of the Bayesian method is greater for smaller sample sizes.  

\begin{figure}[t]
\centering
\includegraphics[width = 4in]{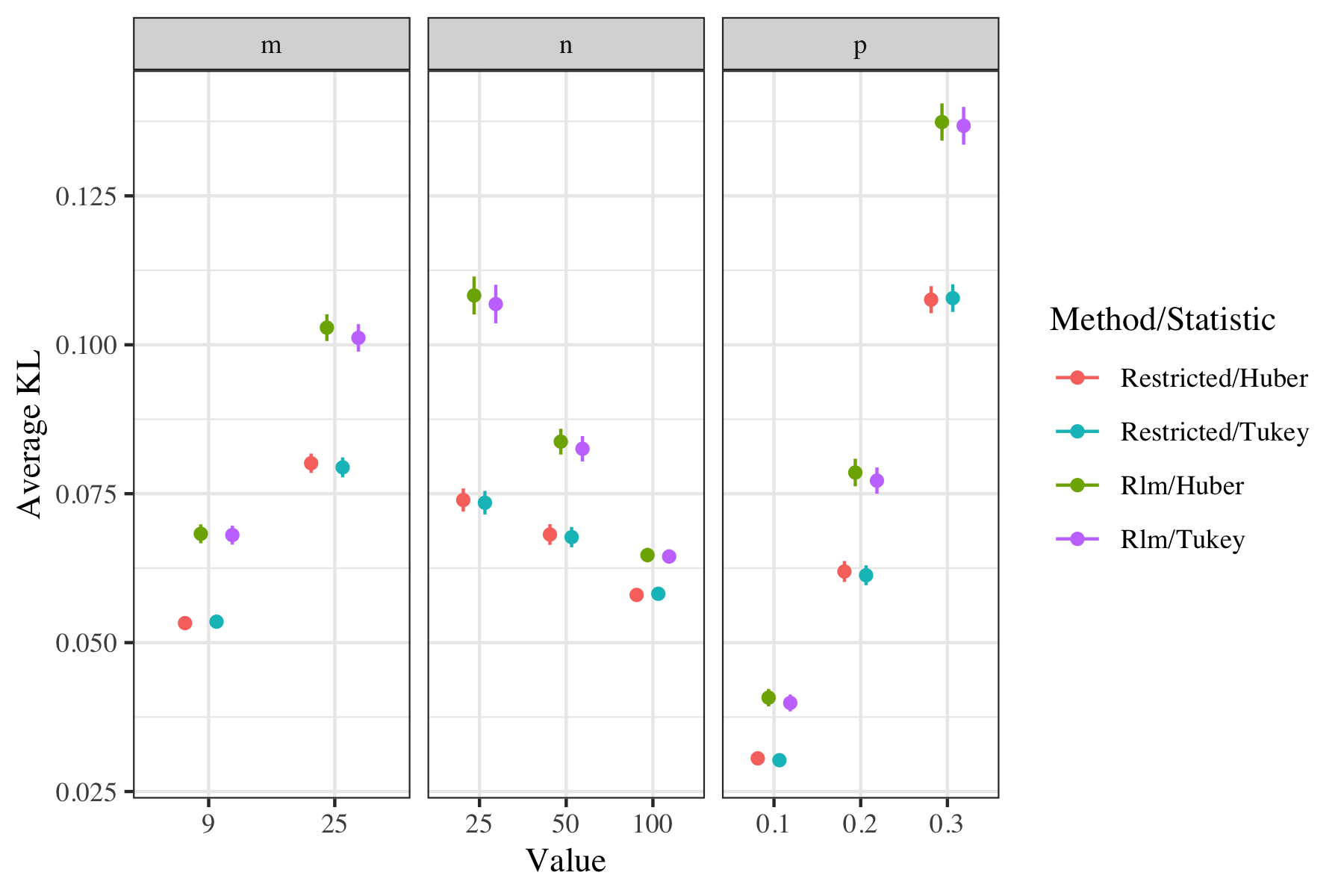}
\caption{Average KL-divergence plus/minus one standard error grouped by the factors $m$ (left), $n$ (middle), and $p$ (right). These results are for the single prior  with $a_s = 5$ and $c = 1$.}
\label{kl_mnp}
\end{figure}

This simulation shows the potential of the restricted likelihood and conveys some cautions.  Specifically, the choice of summary statistics, along with corresponding tuning parameters is important. For the tuning parameters, we applied the default choice of $95\%$ efficiency at the normal.  Under the simulation model here, this choice results in bias in the scale estimation which affects the performance of the method. These choices must be made when using both the classical and Bayesian methods. The Bayesian approach encourages use of a hierarchical model structure and allows one to incorporate prior information in the analysis.  These features can improve predictive performance substantially.  If poorly handled, they can, of course, harm performance.  

%
%
\section{Real Data}
\label{RealData}
We illustrate our methods with a pair of regression models for data from Nationwide Insurance Company that concern prediction of the performance of insurance agencies.
Nationwide sells many of its insurance policies through agencies which provide direct service to policy holders.  
The contractual agreements between Nationwide and these agencies vary.  Our interest is the prediction of future performance of agencies where  performance is measured by the total number of households an agency services (`household count').  
The data are grouped by states with a varying number of agencies by state. Identifiers such as agency/agent names are removed. Likewise, state labels and agency types (identifying the varying contractual agreements) have been made generic to protect the proprietary nature of the data. Additionally, the counts were scaled to have standard deviation one before analysis. 
As an exploratory view, a plot of the square root of (scaled) household count in 2012, against that in 2010 
is shown in Figure \ref{fig:ctVct} for four states. The states have varying numbers of agencies and  the different colors represent the varying types of contractual agreements as they stood in 2010 (`Type').  A significant number of agencies closed sometime before 2012, as represented by the $0$ counts for 2012. Among the open agencies, linear correlations exists with strength depending on agency type and state.  `Type 1' agencies open in 2012 are of special interest.  One could easily subset the analysis to only these agencies, removing the others. However,  we leave them and use the data as a test bed for our techniques by fitting models that do not account for agency closures or contract type.  Our expectation is that the restricted likelihood will facilitate prediction for the `good' part of the data (i.e., open, `type 1' agencies).  

\begin{figure}[t]
\centering
\includegraphics[width=5in]{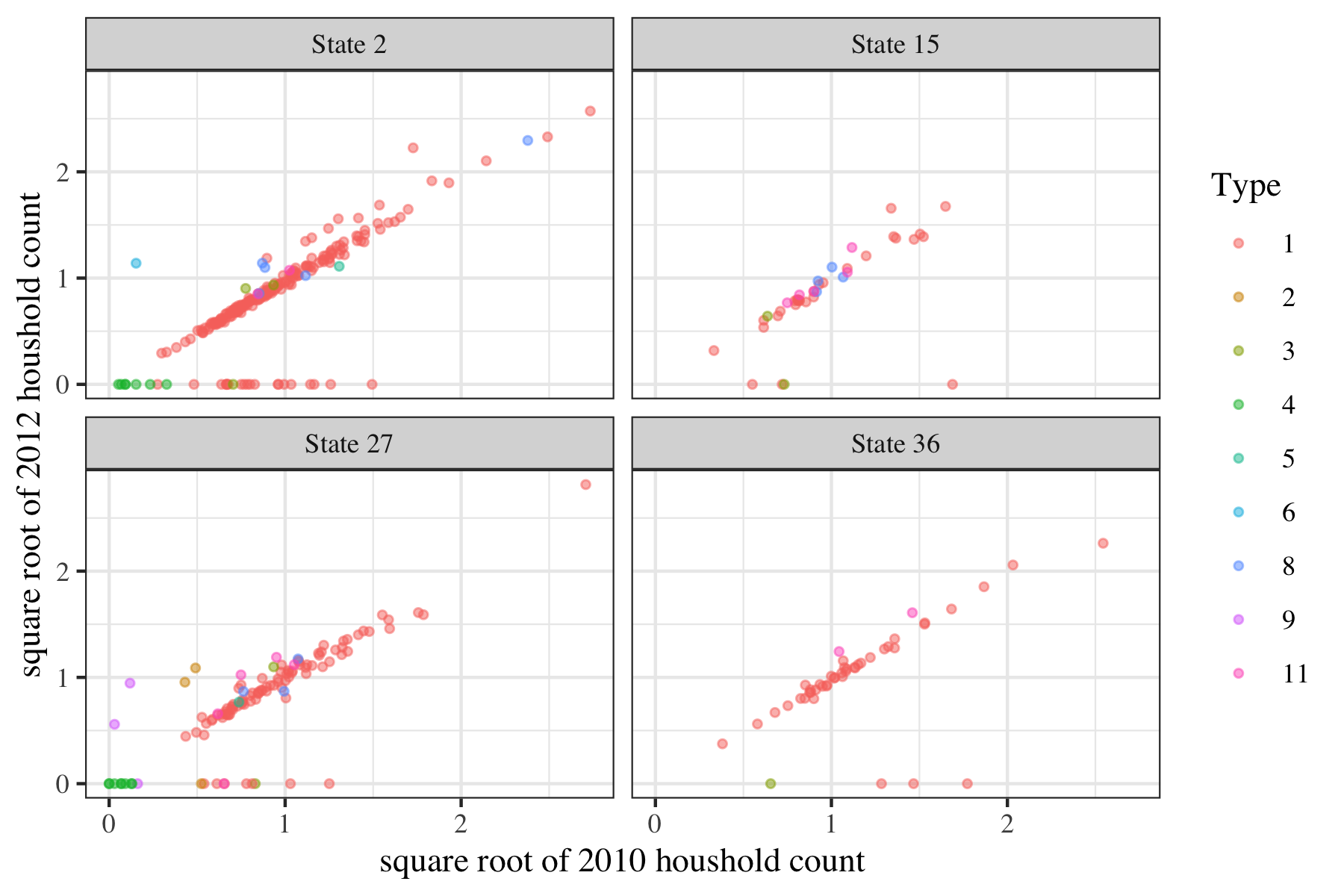}
\caption{The square root of (scaled) count in 2012 versus that in 2010 for four states. The colors represent the varying contractual agreements as they stood in 2010 (`Type').  Agencies that closed during the 2010-2012 period are represented by the zero counts for 2012.}
\label{fig:ctVct}
\end{figure}

\subsection{State Level Regression model}
\label{regModelNW}
The first analysis is based on individual regressions fit separately within states.  The following normal theory regression model is used as the full model for a single state:
\begin{equation}
\label{eq:regModel}
\beta\sim N(\mu_0, \sigma^{2}_0);\ \ \sigma^2\sim IG(a_0,b_0);\ \  
y_{i}=\beta x_{i} +\epsilon_{i},\ \ \epsilon_{i}\iid N(0, \sigma^2),\ i=1,\dots, n, 
\end{equation}
where $y_{i}$ and $x_{i}$ are the square rooted household count in 2012 and 2010 for the $i^{th}$ agency, respectively. 
The hyper-parameters $a_0, b_0, \mu_0$ and $\sigma^{2}_0$ are all fixed and set from a 
robust regression fit to the corresponding state's data from the time period two years
before. Specifically, Let $\hat\beta$ and $\hat\sigma^{2}$ be estimates from the robust linear regression of 2010 counts on 2008 counts.  We fix $a_0 = 5$ and set $b_0 = \hat\sigma^{2}(a_0 - 1)$ so the prior mean is $\hat\sigma^{2}$. We set $\mu_0 = \hat\beta$ and $\sigma^{2}_0 =  n_{p}se(\hat\beta)^{2}$ where $n_{p}$ is the number of agencies in the prior data set and $se(\hat\beta)$ is the standard error of $\hat\beta$ derived from the robust regression. 
This prior is in the spirit of the Zellner's $g$-prior \citep{zellner1986, liang2008}.  In general, scaling the prior variance $se(\hat\beta)^{2}$ by a factor $g = n_{p}$ is analogous to the unit-information prior \citep{kass1995reference}, with the difference that we are using a prior data set, not the current data set, to set the prior. The obvious reason why this model is misspecified is due to omission of the contract type and agency closure information.  Closing our eyes to these variables, many of the cases appear as outliers. Additionally, the model assumes equal variance within each state, an assumption whose worth is arguable (see Figure \ref{fig:ctVct}). 

We compare four Bayesian models: the standard Bayesian normal theory model, two restricted likelihood models, both with simultaneous M-estimators, and a heavy-tailed model.  For the restricted likelihood methods we use the same simultaneous M-estimators as in the simulation of Section \ref{simData} adapted to linear regression.  The heavy-tailed model replaces the normal sampling density in \eqref{eq:regModel} with a $t$-distribution with $\nu = 5$ degrees of freedom. The Bayesian models are all fit using MCMC, with the restricted versions using the algorithm presented in Section \ref{highDim}. 
We also fit the corresponding classical robust regressions and a least squares regression.  

\subsubsection{Method of model comparison}
We wish to examine the performance of the models in a fashion that preserves the essential features of the 
problem.  Since we are concerned with outliers and model 
misspecification, we understand that our models are imperfect and prefer to use an out-of-sample measure of fit.  
This leads us to cross-validation.  We repeatedly split the data into training
and holdout data sets; fitting the model to the training data and assessing performance on the holdout data.  

The presence of numerous outliers in the data implies that both training and validation data will contain 
outliers.  For this reason, the evaluation must be robust to a certain fraction of bad data.  
The two main strategies are to robustify the evaluation function \citep[e.g.,][]{ronchetti1997} or 
to retain the desired evaluation function and trim cases \citep{jung2014}.  Here,
we pursue the trimming approach with log predictive density for the Bayesian models and log density from plug-in 
maximum likelihood for the classical fits used as the evaluation function.

The trimmed evaluation proceeds as follows in our context.  The evaluation function for case $i$ in the holdout data
is the log predictive density, say
$\log(f(y_i))$, with the conditioning on the 
summary statistic suppressed.  The trimming 
fraction is set at $0 \leq \alpha < 1$. To score a method,
we first identify a base method. Denote the predictive density under this method by $f_{b}(y)$.  Under the base method, $\log(f_{b}(y_i))$ is computed for each case in the 
holdout sample, say $i = 1, \ldots, M$.  Order the holdout sample according to the ordering of $\log(f_{b}(y_i))$ and denote this
ordering by $y_{(1)}^b, y_{(2)}^b, \dots, y_{(M)}^b$. That is, for $i<j$
$\log(f_{b}(y_{(i)}^b))<\log(f_{b}(y_{(j)}^b))$. All of the methods are then scored on the holdout sample with the mean trimmed log marginal pseudo likelihood, 
\[TLM_b(A) = (M - [\alpha M])^{-1} \sum_{i=[\alpha M]+1}^{M}
    \log(f_A(y_{(i)}^b)),\]
 where $f_A$ corresponds to the predictive
 distribution under the method ``A'' being scored.  In other words, the $[\alpha M]$ observations with the smallest values of $\log(f_{b}(y))$ are 
removed from the validation sample and all of the methods are scored using only the
remaining $M - [\alpha M]$ observations. Larger values of $TLM_b(A)$ indicate better predictive performance. This process is advantageous to the base method since the smallest scores from this method are guaranteed to be trimmed.  A method
that performs poorly when it is the base method is discredited.  

\subsubsection{Comparison of predictive performance}
`Type 1' agencies are of special interest to the company and so the evaluation of the TLM is done on only holdout samples of `Type 1', whereas the training is done on agencies of all types. This is intended to demonstrate the robustness properties of the various methods. 
Models are fit to four states labelled State 2, 15, 27, and 36, with $n = 222, 40, 117,$ and $46$, representing a range of sample sizes. Fitting is done on $K = 50$ training samples with training sample sizes taken to be $0.25n$ and $0.50n$. Holdout evaluation is done on the remaining (`Type 1') samples. For the data augmentation MCMC step under the restricted likelihood models, the acceptance rates range from  $0.16$ to $0.76$ across the states, repetitions, and two versions of the model. The average $TLM_b(A)$ over the $K = 50$ training/holdout samples for the four states and seven methods are shown in Figure \ref{fig:tlm} where the base model is the Student-t model and $\alpha = 0.3$. Similar results are observed for other base models. The error bars are plus/minus one standard deviation of the average $TLM_b(A)$ over the $K = 50$  training/holdout samples. It is clear that the normal Bayesian model used as the full model (Normal) and the classical ordinary least squares fits (OLS) have poor performance due to the significant amount of outlier contamination in the data. In comparing our restricted methods to their corresponding classical methods, there is small, but consistent improvement across the states and training sample size. For state 2, the largest state with $n = 222$, the restricted and classical robust methods have similar performance especially for larger training sample size. This reflects the diminishing effect of the prior as the sample size grows. Notably, the Student-t model performs poorly in comparison for this state. The predictive distribution explicitly accounts for heavy-tailed values, resulting in poorer predictions of the `good' data (i.e., the Type 1 agencies). Likewise, for State 27, another larger state, the Student-t model is outperformed by our restricted methods.   For the other states (State 15 and 36), the Student-t performs similarly to our restricted methods for smaller training sample size (25\% of the sample). However, the performance is slightly worse for the larger training sample size (50\% of the sample). Intuitively, as more data is available for fitting, more outliers appear and the heavy-tailed model compensates for them by assuming they come from the tails of the model; an assumption which is detrimental for prediction. Comparisons of the models depend on $\alpha$ as seen in Figure \ref{fig:tlmbyAlpha} which shows results for different $\alpha$ for training sample size $0.5n$. For smaller $\alpha$ (in this case $\alpha = 0.1$), many outliers are left untrimmed resulting in lower TLM  for all methods and noticeably larger standard deviation for the classical robust methods and our restricted likelihood. Larger values of $\alpha$ ensure that the predictive performance assessment excludes the majority of outliers. The proportion of $0$ counts in the data is roughly $0.14$, suggesting that $\alpha$ should be at least this large.

\begin{figure}[t]
\centering
\includegraphics[width=6in]{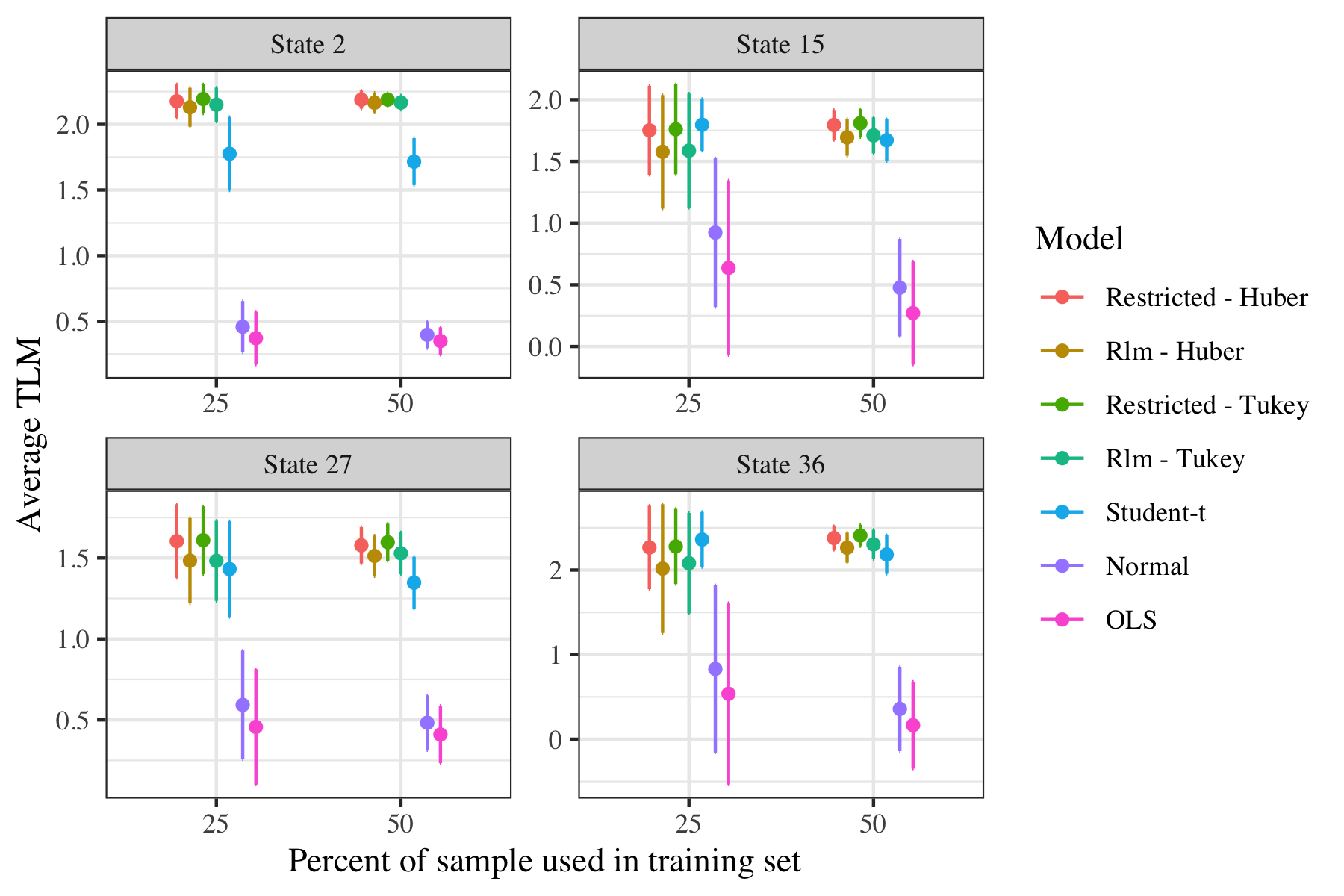}
\caption{Average TLM plus/minus one standard deviation over $K = 50$ splits into training and holdout samples. The panels are for the different states  2, 15, 27, and 36, with $n = 222, 40, 117,$ and $46$, respectively. The horizontal axis is the percent of $n$ used in each training set. The color corresponds to the fitting model.  Larger values of TLM are better.}
\label{fig:tlm}
\end{figure}

\begin{figure}[t]
\centering
\includegraphics[width=6in]{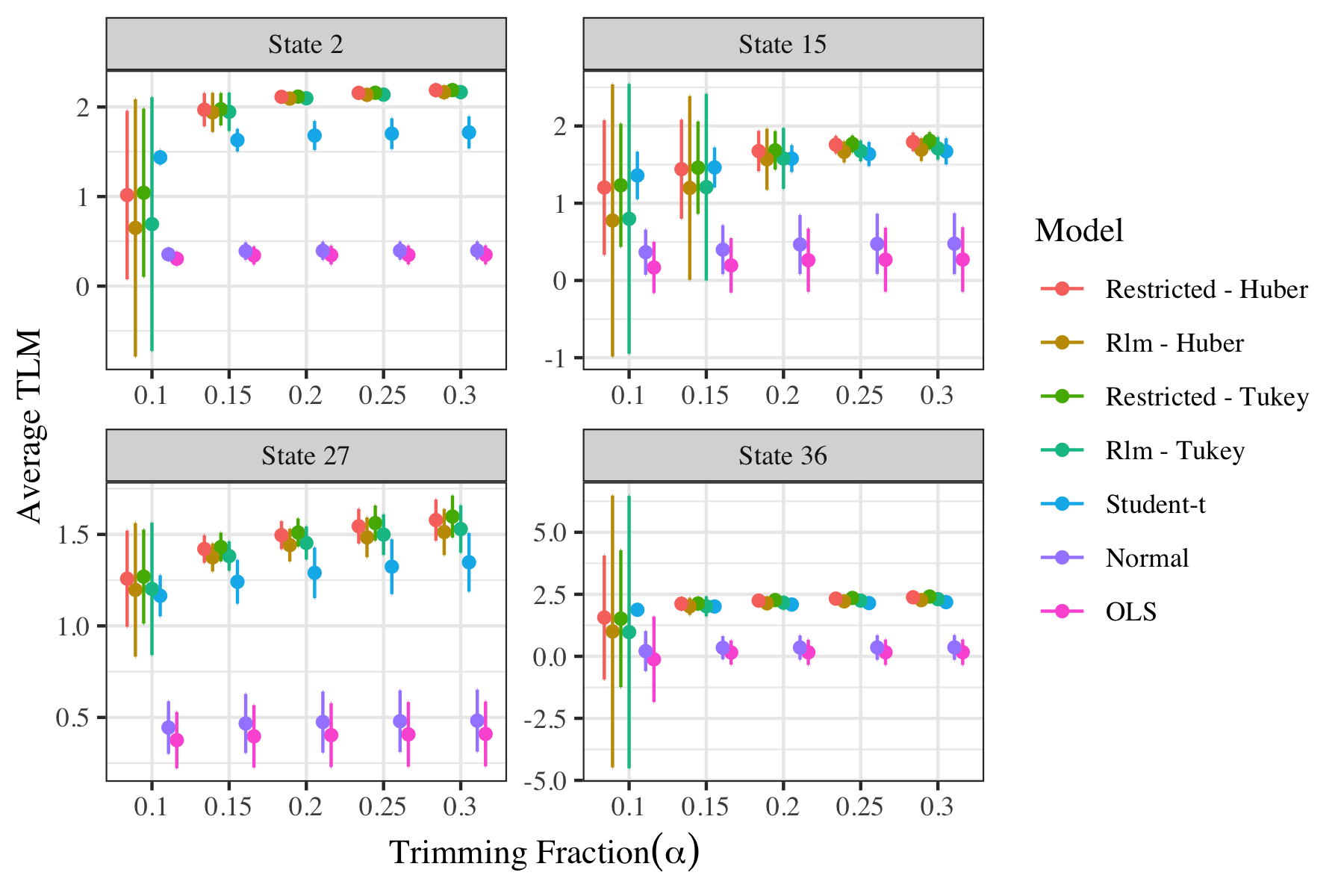}
\caption{Average TLM plus/minus one standard deviation over $K = 50$ splits into training and holdout samples for several values of the trimming fraction $\alpha$. The training sample size used is $0.5n$.  Larger values of TLM are better.}
\label{fig:tlmbyAlpha}
\end{figure}
 
\subsection{Hierarchical regression model}
\label{hierRegNW}
The previous analysis treated states independently. A natural extension is to  reflect similar business environments between states using a hierarchical regression. The proposed model is:
\begin{align}
\label{eq:hierModel}
&\beta\sim N_p(\mu_0, a\sigma_0^{2});\ \ 
\beta_j\iid N_p(\beta, b\sigma_0^{2}); \ \  
\sigma_j^2\sim IG(a_0,b_0);  & \\ \nonumber
& y_{ij}=x_{ij}\beta_j+\epsilon_{ij},\ \ \epsilon_{ij}\iid N(0, \sigma_j^2),\ i=1,\dots, n_j,\ j=1,\dots, J &
\end{align}
where $y_{ij}$ is the $i^{th}$ observation of square rooted household count in 2012 in the $j^{th}$
state, $n_{j}$ is the total number of agencies in state $j$, and $J$ is
the number of states. $x_{ij}$ is the square rooted household count in 2010 and $\beta_j$ represents the individual regression coefficient vector for state $j$. The parameters $\mu_0$,
$\sigma^{2}_0$, $a_0$, and $b_0$ are fixed by fitting  the regression $y_{ij}=x_{ij}\beta+\epsilon_{ij}$ using Huber's M-estimators to the prior data set from two years before. Using the estimates from this model, we set $\mu_{0} = \hat\beta$, $\sigma_{0}^{2} = n_{p}se(\hat\beta)^{2}$ ($n_{p} = 2996$ is the number of observations in the prior data set), $a_{0}=5$ and $b_{0} = \hat\sigma^{2}(a_{0} -1)$. We constrain $a+b=1$
in an attempt to partition the total variance between the individual
$\beta_j$'s and the overall $\beta$. We take $b\sim
\text{beta}(v_1,v_2)$. Using the prior data set, we assess the
variation between individual estimates of the $\beta_j$ to set $v_1$
and $v_2$ to allow for a reasonable amount of shrinkage. To allow for
dependence across the $\sigma_j^2$ we first take
$(z_1,\dots,z_J)\sim N_J(\mathbf{0}, \Sigma_\rho)$ with
$\Sigma_\rho=(1-\rho)\mb{I}+\rho \mb{1}\mb{1}^{\top}$. Then we set
$\sigma^2_j=H^{-1}(\Phi(z_j))$ where $H$ is the cdf of an
$IG(a_0,b_0)$ and $\Phi$ is the cdf of a standard normal. This results in the specified marginal distribution, while
introducing correlation via $\rho$. We assume $\rho\sim
\text{beta}(a_\rho,b_\rho)$ with mean $\mu_\rho=a_\rho/(a_\rho+b_\rho)$ and precision
$\psi_\rho=a_\rho+b_\rho$. The parameters $\mu_\rho$ and
$\psi_{\rho}$  are given beta and gamma distributions, with fixed hyperparameters. More details on setting prior parameters are given in the appendix. 

Using the same techniques as in the previous section, 
we fit the normal theory hierarchical model above, a thick-tailed $t$ version with $\nu = 5$ d.f., and two restricted likelihood versions (Huber's and Tukey's) of the model.  For the restricted methods, we condition on robust regression estimates fit separately within each state. We also fit classical robust regression counterparts and a least squares regression separately within each state. Hierarchical models naturally require more
data and so we include states having at least 25 agencies resulting in 22 states in total and $n = \sum_{j} n_{j} =  3180$ total agencies. For training data we take a stratified (by state) sample of size $3180/2 = 1590$ where the strata sizes are $n_{j}/2$ (rounded to the nearest integer). The remaining data is used for a holdout evaluation using TLM computed separately within each state: $TLM_b(A)_{j} = (M_{j} - [\alpha M_{j}])^{-1} \sum_{i=[\alpha M_{j}]+1}^{M_{j}} \log(f_A(y_{(i)j}^b))$ where $y_{(1)j}^b, y_{(2)j}^b,..., y_{(M_{j})j}^b$ is the ordering of the $M_{j}$ holdout observations within state $j$ according to the log marginals under the base model $b$. For the non-Bayesian models,  $f_A(y^b_{(i)j})$ is estimated using plug-in estimators for the parameters for state $j$. $TLM_b(A)_{j}$ is computed for each state for $K=50$ splits of training and holdout sets. The Bayesian models are fit using MCMC, with the restricted versions applying the algorithm laid out in Section \ref{BayesLinMod} and adapted to the hierarchical setting as described in Section \ref{simData}. For the MH-step proposing augmented data,  the acceptance rates for the two restricted likelihood models across all states and repetitions range from $0.24$ to  $0.74$.

The average over states, $\overline{TLM}_b(A)_{\cdot}= \frac{1}{22} \sum_{j =1}^{22} TLM_b(A)_{j}$ for each of the $K$ repetitions is summarized in Figure
\ref{fig:hierTLM} for several trimming fractions using the Student-t as the base model. The points are the average of the $\overline{TLM}_b(A)_{\cdot}$ over the $K$ repetitions with error bars plus/minus one standard deviation over $K$ with larger values representing better predictive performance. As the trimming fraction used for the TLM increases, so does TLM since more outliers are being trimmed. Similar patterns were seen in the individual state level regressions in Section \ref{regModelNW}. Despite being used as the base model to compute TLM, the Student-t doesn't perform well in comparison to the robust regressions. We attribute this to the assumption of heavier tails resulting in smaller log marginal values on average; emphasizing again that the t-model will do well to discount outlying observations but does not provide a natural mechanism for predicting `good' (i.e., non-outlying) data. For each trimming fraction, our restricted likelihood hierarchical models outperform the classical robust regressions fit separately within each state. The hierarchical model also reduces variance in predictions resulting in smaller error bars. This improvement decreases with $\alpha$ but is still noticeable for $\alpha = 0.2$. Both the Tukey and Huber versions perform similarly. 

\begin{figure}[t]
\centering
\includegraphics[width=6in]{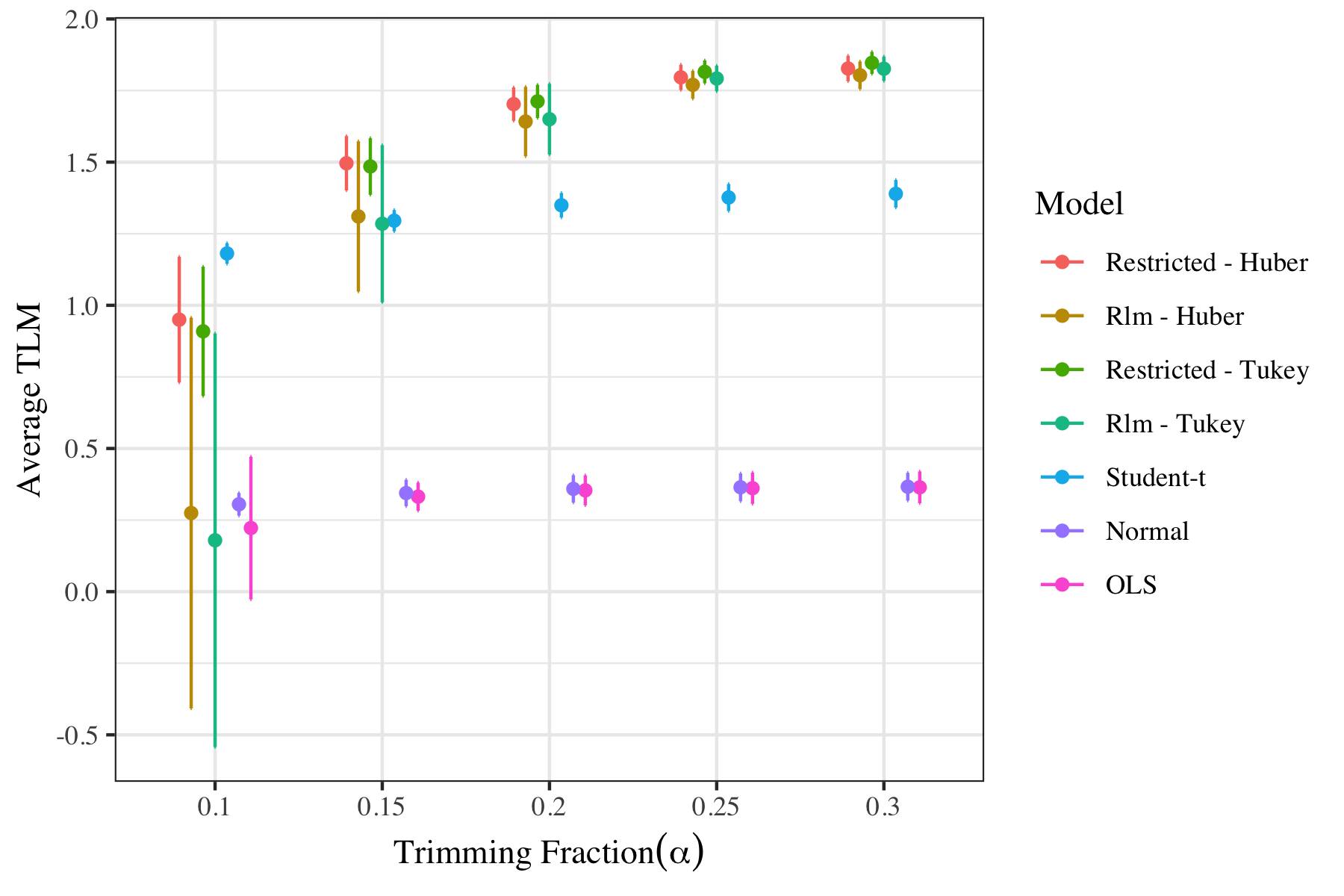}
\caption{Hierarchical model results: $\overline{TLM}_b(A)_{\cdot}$  plus/minus one standard deviation over $K = 50$ splits into training and holdout sets with the Student-t as the base model and several values of the trimming fraction $\alpha$. Larger values of TLM are better.}
\label{fig:hierTLM}
\end{figure}

It is also interesting to examine the results within each state. Figure \ref{fig:hierTLMstate} summarizes ${TLM}_b(A)_{j}$ with $\alpha = 0.3$ for each state where the points and error bars are the averages and plus/minus one standard deviation of ${TLM}_b(A)_{j}$ over the $K = 50$ repetitions. The results are only given for the models using Tukey's M-estimators (Huber's version looks similar). The states are ordered along the $x$-axis according to number of agencies within the state (shown in parentheses). In several of the smaller states, the restricted hierarchical model performs better with similar performance between the models in most of the larger states, a reflection of the decreased influence of the prior.  The hierarchical structure pools information across states, improving performance in the smaller states. The standard deviations are smaller for the hierarchical model in smaller states than they are for the corresponding classical model.  In larger states, the standard deviations are virtually identical. Similar benefits are often seen for hierarchical models \citep[e.g.,][]{gelman2006}.

\begin{figure}[t]
\centering
\includegraphics[width=6.8in]{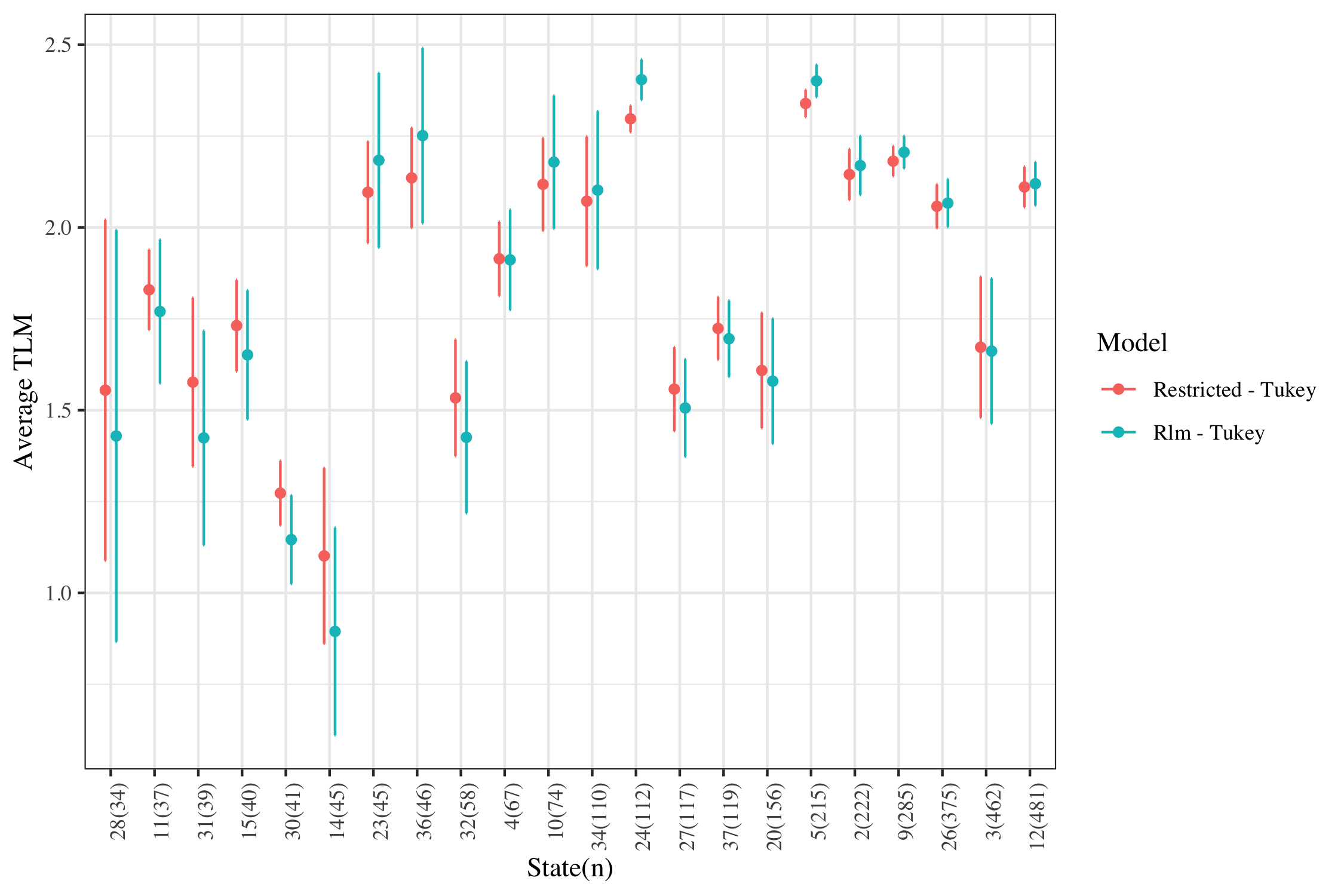}
\caption{Hierarchical model results: ${TLM}_b(A)_{j}$  plus/minus one standard deviation over $K = 50$ repetitions for each state and $\alpha = 0.3$. The states are ordered along the $x$-axis according to number of agencies within the state (shown in parentheses). Results displayed are for the robust models using Tukey's M-estimators. Larger values of TLM are better.}
\label{fig:hierTLMstate}
\end{figure}

\section{Discussion}
\label{Conclusions}

This paper develops a Bayesian version of restricted likelihood where posterior inference is conducted by conditioning on a summary statistic rather than the complete data.  The framework blends classical estimation with Bayesian methods.  
Here, we concentrate on outlier-prone settings where natural choices for the conditioning statistic are classical robust estimators targeting the mean of the non-outlying data (e.g., M-estimators).  The likelihood conditioned on these estimators is used to move from prior to posterior. The update follows Bayes' Theorem, conditioning on the observed estimators exactly.   Computation is driven by MCMC methods, requiring only a supplement to existing algorithms by adding a Gibbs step to sample from the space of data sets satisfying the observed statistic. This step has additional computation costs arising from the need to compute the estimator and an orthonormal basis derived from gradients of the estimator at each iteration. The cost of finding the basis can be reduced by exploiting properties of the geometric space from which the samples are drawn as described in Section~\ref{highDim}. We have seen good mixing of the MCMC chains across a wide-variety of examples.

The Bayesian restricted likelihood framework can be used to address model misspecification, of which the presence of outliers is but one example. The traditional view is that, if the model is inadequate, one should build a better model. In our empirical work, as data sets have become larger and more complex, we have bumped into settings where we cannot realistically build the perfect model. We ask the question ``by attempting to improve our model through elaboration, will the overall performance of the model suffer?'' If yes, we avoid the elaboration, retaining a model with some level of misspecification. Acknowledging that the model is misspecified implies acknowledging that the sampling density is incorrect, exactly as we do when outliers are present. In this sense, misspecified models and outliers are reflections of the same phenomenon, and we see restricted likelihood as a method for dealing with this more general problem. 

Outside of outlier-prone settings, we might condition on the results of a set of estimating equations designed to enforce a lexical preference for those features of the analysis considered most important, yet still producing inferences for secondary aspects of the problem. This leads to questions regarding the choice of summary statistic to apply. In the literature, great ingenuity has been used to create a wide variety of estimators designed to handle specific manifestations of a misspecified model.  The estimators are typically accompanied by asymptotic results on consistency and limiting distribution.  These results can be used as a starting point to choose appropriate conditioning statistics in specific settings.  For example, a set of regression quantiles may be judged the most important feature of a model.  It would then be natural to condition on the estimated regression quantiles and to use a flexible prior distribution to allow for nonlinearities in the quantiles.  The computational strategies we have devised allow us to apply our methods in this setting and to make full predictive inference.  In general, we recommend a choice of conditioning statistic based on the analyst's understanding of the problem, model, reality, deficiencies in the model,  inferences to be made, and the relative importance of various inferences.  

The framework we develop here allows us to retain many benefits of Bayesian methods:  it requires a complete model for the data; it lets us combine various sources of information both through the use of a prior distribution and through creation of a hierarchical model; it guarantees admissibility of our decision rules among the class based on the summary statistic $T(\by)$; and it naturally leads us to focus on predictive inference.  The work does open a number of questions for further work, including a need to investigate restricted likelihood methods as they relate to model selection, model averaging for predictive performance, and model diagnostics.

\section{Appendix}
\label{sec:appendix}
\subsection{Proofs}
\noindent

Proof of Theorem~\ref{Transformation}.  
\begin{proof} 
\begin{eqnarray}
 s(X,\by) & = & s\left(X,\frac{s(X,\by_{obs})}{s(X,\bz^*)}\bz^* + X\left(\bb(X,\by_{obs}) - \bb(X,\frac{s(X,\by_{obs})}{s(X,\bz^*)}\bz^*)\right)\right) \\
& = & \frac{s(X,\by_{obs})}{s(X,\bz^*)} s(X, \bz^*)= s(X,\by_{obs}) , \qquad \mbox{and} \\
 \bb(X,\by) & = & \bb\left(X,\frac{s(X,\by_{obs})}{s(X,\bz^*)}\bz^* + X\left(\bb(X,\by_{obs}) - \bb(X,\frac{s(X,\by_{obs})}{s(X,\bz^*)}\bz^*)\right)\right) \\
 & = & \bb(X,\frac{s(X,\by_{obs})}{s(X,\bz^*)}\bz^*) + \bb(X,\by_{obs}) - \bb(X,\frac{s(X,\by_{obs})}{s(X,\bz^*)}\bz^*) \\ &=& \bb(X,\by_{obs})
\end{eqnarray}
\end{proof}

\noindent
\begin{theorem}
\label{1to1onto}
The mapping $h:  \mathbb{S} \rightarrow \mathcal{A}$ with $h$ defined in Theorem \ref{Transformation} is one-to-one and onto. 
\end{theorem}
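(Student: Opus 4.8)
The plan is to exhibit an explicit two-sided inverse for $h$, which settles injectivity and surjectivity simultaneously and avoids arguing them separately. The natural candidate is the map $g:\mathcal{A}\to\mathbb{S}$ given by $g(\by)=Q\by/\|Q\by\|$, where $Q=I-XX^\top$ is the orthogonal projection onto $\mathcal{C}^\perp(X)$ (recall we may take $X^\top X=I$). First I would check that $g$ is well defined, i.e.\ that $Q\by\neq\bzero$ for every $\by\in\mathcal{A}$. This follows because any $\by\in\mathcal{C}(X)$ can be written $\by=X\bv$, so by regression invariance \ref{regIn} and scale equivariance \ref{scaleEq2Reg}, $s(X,\by)=s(X,\bzero+X\bv)=s(X,\bzero)=s(X,0\cdot\by_{obs})=0$; since every $\by\in\mathcal{A}$ has $s(X,\by)=s(X,\by_{obs})>0$, no element of $\mathcal{A}$ lies in $\mathcal{C}(X)$, hence $Q\by\neq\bzero$.

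Next I would verify $g\circ h=\mathrm{id}_{\mathbb{S}}$, which yields injectivity. Fix $\bz^*\in\mathbb{S}$ and set $\by=h(\bz^*)$. Since $\bz^*\in\mathcal{C}^\perp(X)$ we have $Q\bz^*=\bz^*$, and since $QX=(I-XX^\top)X=0$ the column-space correction term in the definition of $h$ is annihilated by $Q$. Applying $Q$ to the formula for $h$ therefore yields $Q\by=\tfrac{s(X,\by_{obs})}{s(X,\bz^*)}\bz^*$. Because both scale values are positive (using \ref{as} for $s(X,\bz^*)>0$ on the sphere), the leading coefficient is positive, so normalizing recovers $Q\by/\|Q\by\|=\bz^*$, i.e.\ $g(h(\bz^*))=\bz^*$.

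Finally I would verify $h\circ g=\mathrm{id}_{\mathcal{A}}$, which yields surjectivity. Fix $\by\in\mathcal{A}$ and write $\bz=Q\by=\by-X\bv_0$ with $\bv_0=X^\top\by$, and $\bz^*=\bz/\|\bz\|=g(\by)$. Regression invariance \ref{regIn} gives $s(X,\bz)=s(X,\by)=s(X,\by_{obs})$, and scale equivariance \ref{scaleEq2Reg} then gives $s(X,\bz^*)=s(X,\by_{obs})/\|\bz\|$, so the scale factor in $h$ equals $\|\bz\|$ and $\tfrac{s(X,\by_{obs})}{s(X,\bz^*)}\bz^*=\bz=\by-X\bv_0$. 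Regression equivariance \ref{regEq} applied to $\bz=\by-X\bv_0$, together with $\bb(X,\by)=\bb(X,\by_{obs})$ (since $\by\in\mathcal{A}$), shows the argument of the remaining $X(\cdot)$ term collapses to $X\bv_0$; substituting back gives $h(\bz^*)=(\by-X\bv_0)+X\bv_0=\by$. Hence $h\circ g$ is the identity on $\mathcal{A}$.

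Having both composites equal to the identity establishes that $h$ is a bijection. The only delicate point I anticipate is the well-definedness bookkeeping---confirming $s(X,\bz^*)>0$ for every $\bz^*\in\mathbb{S}$ and $Q\by\neq\bzero$ for every $\by\in\mathcal{A}$---both of which rest on the fact that the scale statistic vanishes precisely on $\mathcal{C}(X)$. All remaining steps are direct applications of the equivariance and invariance conditions \ref{regEq}--\ref{scaleEq2Reg} together with the projection identities $Q\bz^*=\bz^*$ and $QX=0$, so no heavy computation is required.
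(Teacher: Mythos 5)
Your proof is correct, and at its core it shares the paper's key idea---the inverse candidate $\bz^* = Q\by/\|Q\by\|$---but it is organized differently and is more complete. The paper argues the two properties separately: injectivity directly (from $h(\bz_1)=h(\bz_2)$ it rearranges to $\bz_1 = c\bz_2 + X\bv$, and membership of both points in $\mathbb{S}$ forces $c=1$, $\bv=\bzero$), and surjectivity by naming exactly your map $g$, with the verification that $h(Q\by/\|Q\by\|)=\by$ dismissed as ``easy to show.'' You instead verify that $g$ is a two-sided inverse, so injectivity falls out of $g\circ h=\mathrm{id}_{\mathbb{S}}$ (the projection $Q$ annihilates the column-space correction term and rescaling recovers $\bz^*$) rather than needing the paper's separate rearrangement step, and surjectivity is precisely the computation $h\circ g=\mathrm{id}_{\mathcal{A}}$ that the paper omits. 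You also nail down well-definedness facts the paper leaves implicit: $Q\by\neq\bzero$ for $\by\in\mathcal{A}$, because conditions \ref{regIn} and \ref{scaleEq2Reg} force the scale statistic to vanish on $\mathcal{C}(X)$ while $s(X,\by)=s(X,\by_{obs})>0$; and $s(X,g(\by))=s(X,\by_{obs})/\|Q\by\|>0$, so that $h$ is actually defined at $g(\by)$. The one soft spot you share with the paper is condition \ref{as}: an almost-sure statement does not literally guarantee $s(X,\bz^*)>0$ at every point of $\mathbb{S}$, so strictly speaking $h$ is a bijection from $\{\bz^*\in\mathbb{S} : s(X,\bz^*)>0\}$ onto $\mathcal{A}$, a restriction inherited from Theorem \ref{Transformation}. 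Net: your organization buys a unified argument and fills in the computations and bookkeeping the paper skips, at the cost of a little length; the paper's version is shorter but leans on two unproved assertions.
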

\begin{proof} 
\noindent 
\textit{One-to-one}: Let $z_{1}, z_{2} \in \mathbb{S}$ with $h(z_{1}) = h(z_{2})$. Rearrangement implies $z_{1} = cz_{2} + Xv$ for known $c\in \mathbb{R}$ and $v\in \mathbb{R}^{p}$ depending on $\bb(X,\by_{obs})$, $s(X,\by_{obs})$, $\bb(X,z_{1})$, $s(X,z_{1})$, $\bb(X,z_{2})$, $s(X,z_{2})$. Given $z_{2}\in \mathbb{S}$, $v\neq 0$ implies $z_{1}\notin \mathcal{C}^{\perp}(X)$ and $c\neq 1$ implies $||z_{1}|| \neq 1$. Thus $z_{1} \in \mathbb{S}$ implies $c=1$ and $v =0$. 

\noindent \textit{Onto:} Let $\by \in \mathcal{A}$ and consider its projection onto $\mathcal{C}^{\perp}(X)$: $Q\by$ where $Q = I - XX^{\top}$. It is easy to show that $\bz^{*} = Q\by/||Q\by|| \in \mathbb{S}$ and $h(\bz^{*}) = \by$.
\end{proof}

\noindent
Proof of Lemma~\ref{gradSTheoremReg}.
\begin{proof}
We first show that $\nabla s(X,\by)\in \mc{C}^\perp(X)$. Recall that
$H=I-Q$. By the regression invariance property \ref{regIn}, we have
\label{perpGradReg}
\begin{equation}
\label{eq:lem3.2}
\begin{aligned}
s(X,\by)=s(X, Q\by+H\by)=s(X, Q\by).
\end{aligned}
\end{equation}
Thus, by the chain rule $\nabla s(X,\by)=Q\nabla s(X,Q\by)=Q\nabla s(X, \bz)$. Hence $X^\top \nabla s(X,\by)=0$ as desired.
From equation~\eqref{eq:lem3.2}, all vectors $\bz'\in \Pi(\mathcal{A})$ satisfy $s(X,\bz')=
s(X,\by)=s(X,\by_{obs})$, and so all directional derivatives of $s$ along each tangent $\bv$ to
  $\Pi(\mathcal{A})$ in $\mc C^\perp(X)$ at $\bz$ are equal to 0 (i.e., $\nabla s(X,\bz) \cdot \bv=0$).  Thus $\nabla s(X,\bz)$ is orthogonal to  $\Pi(\mathcal{A})$ at $\bz$.  
Since $\Pi(\mathcal{A})$ has dimension $n-p-1$, $\nabla s(X,\bz)$ gives the unique (up to scaling and reversing direction) normal in the $n-p$ dimensional $\mc C^\perp(X)$.  
\end{proof}

\noindent
Proof of Lemma~\ref{lem:basis}

\begin{proof}
Without loss of generality, assume the columns of $X$ form an
orthonormal basis for $\mc C (X)$ and likewise the columns of $W$ form
and orthonormal basis for $\mc C^\perp(X)$. With earlier notation,
$H=XX^{\top}$ and $Q=WW^{\top}$. The set $\mc A$ is defined by the
$p+1$ equations  $s(X,\by)=s(X,\by_{obs})$, 
$b_1(X,\by)=b_1(X,\by_{obs}),\dots,  b_p(X,\by)=b_p(X,\by_{obs})$. Consequently, the gradients are orthogonal to $\mc A$. Let  $\nabla\bb(X,\by)$ denote the $n\times p$ matrix with columns $\nabla b_1(X,\by),\dots, \nabla b_p(X,\by)$. We seek to show the $n \times (p+1)$ matrix $[\nabla\boldsymbol\bb(X,\by),\nabla s(X,\by)]$ has rank $p+1$. Using property \ref{regEq}, we have that 
\[
\bb(X, \by)=\bb(X,Q\by+H\by)=\bb(X, Q\by)+X^\top \by
\] 
Then $\nabla \bb(X,\by)=Q\nabla\boldsymbol\bb(X, Q\by)+ X$ and 
\begin{eqnarray}
\label{BigMatrix}
[XX^\top, WW^\top]^\top[\nabla\boldsymbol\bb(X,\by),\nabla s(X,\by)]=
 \left( \begin{array}{cc}
X & \mathbf{0} \\
WW^\top\nabla b(X,\by)  &\nabla s(X,\by)  \\ \end{array} \right)
\end{eqnarray}
The last column comes from Lemma \ref{gradSTheoremReg}. The matrix $[XX^\top, WW^\top]^\top$ is of full
column rank (rank $n$), and so the rank of $[\nabla\boldsymbol\bb(X,\by),\nabla s(X,\by)]$ is the same as the rank
of the matrix on the right hand side of (\ref{BigMatrix}).  This last
matrix has rank $p+1$ since $\nabla s(X,\by) \ne \bzero$ by \ref{scaleEq2Reg}, and so does 
$[\nabla b(X,\by),\nabla s(X,\by)]$.
\end{proof}

\noindent
Proof of Lemma~\ref{lem:fullrank}

\begin{proof}
$P$ is the projection of the columns of $A$ onto $\mc
C^{\perp}(X)$. For this to result in a loss of rank, a subspace of
$\mc T_{y}(\mc A)$ must belong to $\mc C(X)$.  Following property
\ref{regEq}, for an arbitrary vector $X \bv \in \mc C(X)$, $\bb(X,\by
+ X \bv) = \bb(X,\by) + \bv$.  From the property, we can show that the directional derivative
  of $\bb$ along $X \bv$ with $\bv \ne \bzero$ is $\bv$, which is a
  nonzero vector. Hence $X\bv \notin \mc T_{y}(\mc A)$.  
\end{proof}

\noindent
Proof of Corollary~\ref{theorem:sings}

\begin{proof}
The corollary relies on a lemma and theorem from \cite{miao1992} which we restate 
slightly for brevity of presentation.  The principal angles between subspaces pluck off a
set of angles between subspaces, from smallest to largest.  The number of such angles 
is the minimum of the dimensions of the two subspaces.  Miao and Ben-Israel's first result
(their Lemma 1) connects these principal angles to a set of singular values, and hence to 
volumes.   
\begin{lemma}{(Miao, Ben-Israel)}
\label{MBI:lemma}
Let the columns of $Q_L\in \mathbb{R}^{n\times l}$ and $Q_M\in
\mathbb{R}^{n\times m}$ form orthonormal bases for linear subspaces
$L$ and $M$ respectively, with $l \leq m$. Let $\sigma_1\geq\cdots\geq
\sigma_l\geq0$ be the singular values of $Q_M^\top Q_L$. Then $\cos
\theta_i=\sigma_i, i=1,\dots,l$ where $0\leq\theta_1\leq\theta_2\leq
\cdots \leq\theta_l\leq\frac{\pi}{2}$ are the principal angles between $L$ and $M$.  
\end{lemma}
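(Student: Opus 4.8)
The plan is to derive this classical identity between principal angles and singular values from the variational (recursive) definition of principal angles together with the Courant--Fischer / Ky Fan characterization of singular values coming from the singular value decomposition (SVD). Recall that the principal angles $0 \le \theta_1 \le \cdots \le \theta_l \le \pi/2$ between $L$ and $M$ are defined recursively: $\cos\theta_1 = \max\{ x^\top y : x \in L,\, y \in M,\, \|x\| = \|y\| = 1\}$, attained at principal vectors $x_1, y_1$, and for $k \ge 2$ one maximizes the same bilinear form over unit vectors $x \in L$, $y \in M$ subject to $x \perp x_1, \dots, x_{k-1}$ and $y \perp y_1, \dots, y_{k-1}$.

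First I would pass to coordinate vectors. Because $Q_L$ and $Q_M$ have orthonormal columns, every unit vector $x \in L$ can be written $x = Q_L a$ for a unique unit vector $a \in \mathbb{R}^l$ (the map $a \mapsto Q_L a$ is a linear isometry onto $L$), and likewise $y = Q_M b$ with $\|b\| = 1$. Setting $C := Q_L^\top Q_M \in \mathbb{R}^{l \times m}$, we get $x^\top y = a^\top C b$; since $Q_M^\top Q_L = C^\top$ shares with $C$ the same list of $l$ singular values $\sigma_1 \ge \cdots \ge \sigma_l \ge 0$, and $\|C\|_2 \le 1$ because $Q_L, Q_M$ are contractions, each $\sigma_i \in [0,1]$ and the angles $\theta_i = \arccos \sigma_i$ are well defined. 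The orthogonality constraint $x \perp x_j$ translates, via $Q_L^\top Q_L = I_l$, to $a \perp u_j$ where $x_j = Q_L u_j$, and analogously on the $M$ side; this is the key compatibility I would want to record.

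Next I would match the two recursions through the SVD $C = \sum_{i=1}^{l} \sigma_i u_i v_i^\top$, with $\{u_i\}$ orthonormal in $\mathbb{R}^l$ and $\{v_i\}$ orthonormal in $\mathbb{R}^m$. The base case is standard: $\max_{\|a\|=\|b\|=1} a^\top C b = \sigma_1$, attained at $a = u_1$, $b = v_1$, so $\cos\theta_1 = \sigma_1$ with principal vectors $x_1 = Q_L u_1$, $y_1 = Q_M v_1$. For the inductive step I would expand $a = \sum_i \alpha_i u_i$ and $b = \sum_i \beta_i v_i + b_\perp$ and substitute into $a^\top C b = \sum_i \sigma_i \alpha_i \beta_i$; the constraints $a \perp u_1,\dots,u_{k-1}$ and $b \perp v_1,\dots,v_{k-1}$ force $\alpha_1 = \cdots = \alpha_{k-1} = 0$ and $\beta_1 = \cdots = \beta_{k-1} = 0$, whence $a^\top C b \le \sigma_k$ with equality at $a = u_k$, $b = v_k$. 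By induction $\cos\theta_k = \sigma_k$ for every $k = 1, \dots, l$, which is the claim.

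I expect the deflation step to be the main obstacle: one must verify carefully that after removing the first $k-1$ principal vectors, the maximizing coordinate vectors can be taken to be the next pair of singular vectors $u_k, v_k$ simultaneously, so that the recursion neither stalls nor skips a singular value. A minor additional subtlety is the case of repeated singular values, where the principal vectors themselves are not unique; this does not affect the conclusion, since the lemma only asserts that the angles (equivalently the $\sigma_i$) are determined, and these are invariant under the choice of SVD.
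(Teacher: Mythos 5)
Your argument is essentially the standard Bj\"orck--Golub derivation of this classical fact, and it is sound; but note first that the paper itself offers no proof of this lemma at all. It is quoted (restated ``slightly for brevity'') from Miao and Ben-Israel (1992), where it is their Lemma~1, and enters the paper only as a cited ingredient in the proof of Corollary~\ref{theorem:sings}. So your proposal is not an alternative to a proof in the paper --- it supplies one where the paper delegates to the literature. What your route buys is a self-contained justification of the volume computation: passing to coordinates through the isometries $a \mapsto Q_L a$, $b \mapsto Q_M b$, identifying the recursive variational definition of the principal angles with the variational characterization of the singular values of $C = Q_L^\top Q_M$, and matching the two recursions through the SVD is exactly the textbook proof, and all the individual steps you give (the isometry reduction, $\|C\|_2 \le 1$ so that $\arccos \sigma_i$ is defined, the base case, and the Cauchy--Schwarz bound $\sum_{i \ge k} \sigma_i \alpha_i \beta_i \le \sigma_k$) are correct.

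The one place where your write-up is thinner than it should be is the issue you flag and then wave off: the recursion defining $\theta_k$ constrains $x$ and $y$ to be orthogonal to the \emph{previously chosen principal vectors}, which, when singular values repeat, need not be the singular vectors $u_1,\dots,u_{k-1}$, $v_1,\dots,v_{k-1}$ of your fixed SVD; your inductive step as written only covers the case where the earlier maximizers were taken to be singular pairs. The clean repair is via the first-order conditions: any maximizing pair $(a_1,b_1)$ of $a^\top C b$ over unit vectors satisfies $C b_1 = \sigma_1 a_1$ and $C^\top a_1 = \sigma_1 b_1$ (Lagrange multipliers, with common multiplier equal to the maximum value), hence is itself a singular pair for $\sigma_1$. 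It follows that $C$ maps $b_1^\perp$ into $a_1^\perp$ and $C^\top$ maps $a_1^\perp$ into $b_1^\perp$, and the restriction of $C$ to these complements has singular values $\sigma_2 \ge \cdots \ge \sigma_l$. Induction on this deflated map then yields $\cos\theta_k = \sigma_k$ for \emph{every} admissible choice of earlier principal vectors, which is precisely the well-definedness your closing remark asserts without argument. With that insertion, your proof is complete.
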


Miao and Ben-Israel's second result (their Theorem 3) makes a match between the principal
angles between a pair of subspaces and the principal angles between their orthogonal complements.  
\begin{theorem}{(Miao, Ben-Israel)}
\label{MBI:thm}
The nonzero principal angles between subspace $L$ and $M$ are equal to the 
nonzero principal angles between $L^\perp$ and $M^\perp$.
\end{theorem}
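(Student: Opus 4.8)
The plan is to encode both sides of the statement in a single orthogonal matrix and then exploit the symmetry between a matrix and its transpose. First I would fix orthonormal bases: let the columns of $Q_L$ and $Q_M$ span $L$ and $M$, and let the columns of $Q_{L^{\perp}}$ and $Q_{M^{\perp}}$ span $L^\perp$ and $M^\perp$, so that $[Q_L, Q_{L^{\perp}}]$ and $[Q_M, Q_{M^{\perp}}]$ are orthogonal bases of $\mathbb{R}^n$. Without loss of generality I would take $\dim L = l \le m = \dim M$, since the statement is symmetric in $L$ and $M$. Forming the orthogonal matrix
\[
Z = [Q_L, Q_{L^{\perp}}]^\top [Q_M, Q_{M^{\perp}}] = \begin{pmatrix} A & B \\ C & D \end{pmatrix}, \qquad A = Q_L^\top Q_M, \quad D = Q_{L^{\perp}}^\top Q_{M^{\perp}},
\]
Lemma~\ref{MBI:lemma} identifies the singular values of the $l \times m$ block $A$ with the cosines $\cos\theta_i$ of the principal angles between $L$ and $M$, and the singular values of the $(n-l)\times(n-m)$ block $D$ with the cosines $\cos\phi_j$ of the principal angles between $L^\perp$ and $M^\perp$.

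The key step is to read off two block relations from orthogonality. From $ZZ^\top = I$ the top-left block gives $AA^\top + BB^\top = I_l$, so $BB^\top = I_l - AA^\top$ has eigenvalues $\sin^2\theta_i$, $i = 1, \ldots, l$. From $Z^\top Z = I$ the bottom-right block gives $B^\top B + D^\top D = I_{n-m}$, so $B^\top B = I_{n-m} - D^\top D$ has eigenvalues $\sin^2\phi_j$, $j = 1, \ldots, n-m$. The accounting here is clean precisely because $l \le m$: then $A$ has exactly $l$ singular values, and $n - l \ge n - m$ forces $D$ to have exactly $n-m$ singular values, so $AA^\top$ carries the $l$ cosines and $D^\top D$ carries the $n-m$ cosines with no stray zero eigenvalues introduced by padding on either side.

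Finally I would invoke the elementary fact that $BB^\top$ and $B^\top B$ share the same nonzero eigenvalues. This yields the multiset identity $\{\sin^2\theta_i : \theta_i \ne 0\} = \{\sin^2\phi_j : \phi_j \ne 0\}$. Since $\theta \mapsto \sin^2\theta$ is a bijection of $[0, \pi/2]$ onto $[0,1]$ vanishing only at $\theta = 0$, the nonzero entries correspond exactly to the nonzero principal angles, and the two multisets of nonzero principal angles therefore coincide. I expect the only real obstacle to be the dimension bookkeeping in the middle step: making sure the reduction to $l \le m$ guarantees that $AA^\top$ and $D^\top D$ carry precisely the stated cosines, so that the zero-angle directions (shared subspace) and right-angle directions are correctly excluded from the nonzero matching. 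Once that is pinned down, the rest is routine.
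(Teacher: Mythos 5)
Your proof is correct, but there is nothing in the paper to compare it against: the paper does not prove this statement at all. It is quoted (as their Theorem 3) from \cite{miao1992} and used as a black box in the proof of Corollary~\ref{theorem:sings}. What you have supplied is a genuine, self-contained derivation that needs only Lemma~\ref{MBI:lemma} and elementary linear algebra: orthogonality of $Z = [Q_L, Q_{L^{\perp}}]^\top [Q_M, Q_{M^{\perp}}]$ gives $BB^\top = I_l - AA^\top$ from $ZZ^\top = I$ and $B^\top B = I_{n-m} - D^\top D$ from $Z^\top Z = I$, and the standard fact that $BB^\top$ and $B^\top B$ share their nonzero eigenvalues transfers the nonzero $\sin^2$ values --- hence the nonzero angles --- between the pair $(L,M)$ and the pair $(L^\perp, M^\perp)$. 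Your dimension bookkeeping is also right and is where a sloppier argument would fail: with $l \le m$, the matrix $AA^\top$ is $l \times l$ and carries exactly the $l$ squared cosines of the angles between $L$ and $M$, while $D^\top D$ is $(n-m)\times(n-m)$ and carries exactly the $n-m$ squared cosines of the angles between $L^\perp$ and $M^\perp$, so neither side is padded with spurious zero eigenvalues; and since $\theta \mapsto \sin^2\theta$ is injective on $[0,\pi/2]$ and vanishes only at $0$, matching nonzero $\sin^2$ multisets is equivalent to matching multisets of nonzero angles. One cosmetic remark: Lemma~\ref{MBI:lemma} as stated concerns $Q_M^\top Q_L$ rather than your block $A = Q_L^\top Q_M$, but transposition preserves singular values, so the identification stands. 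In short, your argument is sound and supplies the proof the paper outsources to the literature.
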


To establish the corollary, we appeal to Lemma~\ref{MBI:lemma} and Theorem~\ref{MBI:thm}.  Translating Miao and Ben Israel's
notation, we have $M=\mc C^\perp (X)$, $Q_M=W$, $L=\mc
T_{\boldsymbol{y}}(\mc{A})$, and $Q_L= A$. By Theorem~\ref{MBI:thm}, the
nonzero principal angles between $\mc{T}_{\boldsymbol{y}}(\mc{A})$ and
$\mc C^\perp(X)$ are the same as the nonzero principal angles between
$\mathcal{T}_{\boldsymbol{y}}^\perp(\mathcal{A})$ and $\mc C(X)$. By
\ref{MBI:lemma}, the non-unit singular values of $W^\top A$ are the
same as the non-unit singular values of $U^\top B$.  
\end{proof}

\subsection{Setting the hierarchical prior values}
 
This section describes the how the prior parameters are set in  Section \ref{hierRegNW}. Using the previous data set from two years prior, we fit separate (robust) regressions to each state and a  regression to the entirety of the data at once. Let the estimates for the fits to each state be $\hat{\beta_{1}}, \dots, \hat \beta_{J}, \hat \sigma_{1}, \dots, \hat \sigma_{J}$ and the estimates from the single regression be $\hat \beta$ and $\hat \sigma$. These are classical robust estimates using Tukey's regression and Huber's scale. For this sections, let $n_{j}$ denote the number of observations in the $j^{th}$ state (of the previous data set) and set $n_{p}=\sum n_{j}$. 

First, consider $v_{1}$ and $v_{2}$ in the prior $b\sim\text{beta}(v_{1},v_{2})$.  In the hierarchical model \eqref{eq:hierModel}, $b=0$ implies all the $\beta_{j}'s$ are equal (no variation between states) and $b=1$ implies the $\beta_{j}'s$ vary about $\mu_{0}$ according to $\Sigma_{0}=n_{p}\cdot \mbox{se}(\hat\beta)^{2}$ (see Section \ref{regModelNW}). We seek a prior measure for what we think $b$ should be. Using the prior fit, a measure for  uncertainty for $\beta$ is $\Sigma_{\hat\beta}=\mbox{se}(\hat\beta)^{2}$, the estimate of the variance from the single regression. For the $\beta_{j}'s$, take $\delta_{j}=\hat\beta_{j}-\hat\beta$ and set the prior uncertainty to $\Sigma_{\delta}=n_{p}^{-1}\sum_{j} n_{j}\delta_{j}^{2}$. Consider  $g= \Sigma_{\delta}/\Sigma_{\hat\beta}$ measuring of the amount of uncertainty between the $\beta_{j}'s$ relative to that of $\beta$. Now in the prior, we heuristically set the uncertainty in the $\beta_{j}'s$ ($b\Sigma_{0}$) to be approximately equal to $g\cdot\Sigma_{\hat\beta}$. That is, $b\Sigma_{0}\approx g\cdot\Sigma_{\hat\beta}= \frac{g}{n} \Sigma_{0}$, suggesting $b\approx  \frac{g}{n}$.  Thus, we set $E[b]=\frac{g}{n}$. The precision, $v_{1}+v_{2}$, is set to $10$, completing the specification for the prior on $b$. 

Finally, recall $\rho\sim\text{beta}(a_\rho,b_\rho)$ with mean $\mu_\rho=a_\rho/(a_\rho+b_\rho)$ given a beta prior and precision
$\psi_\rho=a_\rho+b_\rho$ given a gamma prior. There is little evidence of any strong correlation amongst estimates of $\sigma^{2}_{j}$ in the prior data set and we set the prior mean of $\mu_{\rho}$ equal to $0.2$ and prior variance to $.01$. Noting $\text{var}(\rho|\mu_{\rho}, \psi_{\rho})=\mu_{\rho} (1-\mu_{p})/(\psi_{\rho}+1)$ we plug in $\mu_{\rho} = 0.2$ and $\text{var}(\rho|\mu_{\rho}, \psi_{\rho}) = 0.01$. Solving for $\psi_{\rho}$ results in a value of $15$. This is taken to be the mean of the gamma prior on $\psi_{\rho}$. Finally, we  set the rate parameter for to 1 implying the variance of the gamma prior is equal to its the mean. With this specification, the prior on $\rho$ has 80\% of the central mass between roughly $0.03$ and $0.42$ and reflects our prior belief that there is likely only weak positive correlation amongst the $\sigma^{2}_{j}$'s.

\bibliographystyle{apa}
\bibliography{refPaper1}

\end{document}